\long\def\remove#1{}
\newtheorem{theorem}{Theorem}[section] 
\newtheorem{proposition}[theorem]{Proposition}
\newenvironment{proof}{{\em Proof:}}{\hfill{\hfill\rule{2mm}{2mm}}}
\newcommand {\mm}[1] {\ifmmode{#1}\else{\mbox{\(#1\)}}\fi}
\newcommand{\eps}{{\varepsilon}}
\newcommand{\TT}		{\mathrm {\mathbb{T}}}
\newcommand{\g}			{g}
\newcommand{\KK}		{{\cal K}}
\newcommand{\rank}		{\mm {\rm rank}}
\newcommand{\Tub}		{\mathrm {Tub}}
\newcommand{\length}		{\mm {\rm Len}}
\newcommand{\SP}	{\Pi}
\renewcommand{\Vert}	{\mathrm{Vert}}
\newcommand{\concatenate}		{\circ}
\newcommand{\Cech}  {{\cal C}}
\newcommand{\Rips}	{{\cal R}}
\newcommand{\homo}	{{\sf H}_1}
\newcommand{\homok}	{{\sf H}_k}
\newcommand{\mainAlg}		{{\sc {ShortLoop}}\xspace}
\newcommand{\CanonGen}		{{\sc {CanonGen}}\xspace}
\newcommand{\SPGen}		{{\sc {SPGen}}\xspace}
\newcommand{\hsim}		{\approx}
\newcommand {\myparagraph}[1] {\vspace{0.15in} \noindent {\bf {#1}}}
\begin{document}

\title{Approximating Loops in a Shortest Homology Basis from Point Data}

\author{
Tamal K. Dey\thanks{
Department of Computer Science and Engineering,
The Ohio State University, Columbus, OH 43210, USA.
Email: {\tt tamaldey@cse.ohio-state.edu}}
\quad\quad
Jian Sun\thanks{
Department of Computer Science,
Stanford University, Palo Alto, CA 94305, USA.
Current address: Department of Computer Science, Princeton University, Princeton,
NJ 08544, Email: {\tt jiansun@cs.princeton.edu}}
\quad\quad Yusu Wang\thanks{
Department of Computer Science and Engineering,
The Ohio State University, Columbus, OH 43210, USA.
Email: {\tt yusu@cse.ohio-state.edu}}
}

\date{}
\maketitle

\begin{abstract}
Inference of topological and geometric attributes of a
hidden manifold from its point data is a fundamental problem
arising in many scientific studies and engineering applications.
In this paper we present an algorithm to compute a set of
loops from a point data that presumably
sample a smooth manifold $M\subset \mathbb{R}^d$. 
These loops approximate a {\em shortest} basis
of the one dimensional homology group $\homo(M)$ 
over coefficients in finite field $\mathbb{Z}_2$.
Previous results addressed the issue of computing the rank
of the homology groups from point data, but there is no result
on approximating the shortest basis of a manifold from its
point sample. In arriving our result, we also present a 
polynomial time algorithm for computing a shortest basis
of $\homo(\KK)$ for any finite {\em simplicial complex} $\KK$ 
whose edges have non-negative weights.
\end{abstract}
\thispagestyle{empty}
\setcounter{page}{0}
\newpage

\section{Introduction}
Inference of unknown structures  
from point data is a fundamental problem in many areas of
science and engineering that has motivated 
wide spread research~\cite{AB98,CEH07,pca,NSW06,mds,ZC05}. 
Typically, this data is assumed to be sampled from
a manifold sitting in a high dimensional space whose
geometric and topological properties are to be derived
from the data. In this work, we are particularly
interested in computing a set of loops from data which not only
captures the topology but is also aware of the geometry
of the sampled manifold. Specifically, we aim to approximate 
a shortest basis for the one dimensional
homology group from the data. 

Recently, a few algorithms for computing homology groups from point data
have been developed.
One approach would be to reconstruct the sampled space from its 
point data~\cite{BGO07,CL06,CDR05}
and then apply known techniques for homology computations
on triangulations~\cite{Hatcher}. However, this
option is not very attractive since a full-blown reconstruction
with known techniques requires costly computations
with Delaunay triangulations in high dimensions.
Chazal and Oudot~\cite{CO08} showed how one can use  
less constrained data structures such as Rips, \v{C}ech, and witness
complexes to infer the rank of the homology groups
by leveraging persistence algorithms~\cite{ELZ02,ZC05}.
Among these, the Rips complexes are the easiest to compute though
they consume more space than the others, an issue
which has started to be addressed~\cite{DL09}.

All of the above works so far considered only focus on computing the
Betti numbers, the rank of the homology groups. Although the
persistence algorithms~\cite{ELZ02,ZC05} also provide representative
cycles of a homology basis, they remain oblivious to the
geometry of the manifold. As a result, these cycles do not
have nice geometric properties. 
A natural question to pose is that if the loops of 
the one dimensional homology group
are associated with a length under some metric, can one
approximate/compute
a shortest set of loops that generate the homology group
in polynomial time?
This question has been answered in affirmative for the special
case of surfaces when they are represented with triangulations~\cite{EW05}.
In fact, considerable progress has been made for this special
case on various versions of the problem.
We cannot apply these techniques, mainly because we deal with point data
instead of an input triangulation. Also, these
works either consider a surface~\cite{CEN092,CEN09,CE06,EW05} 
instead of a manifold
of arbitrary dimension in an Euclidean space, or use
a local measure other than the lengths of the
generators in a basis~\cite{CF09}. 

Our main result is an algorithm that can compute a set of
loops from a Rips complex of the given data and a proof
that the lengths of the computed loops
approximate those of a shortest basis of the one dimensional
homology group of the sampled manifold. In arriving at this
result, we also show how to compute a shortest basis for
the one dimensional homology group of any finite {\em simplicial complex}
whose edges have non-negative weights. Given that computing
a shortest basis for $k$-dimensional homology groups of a simplicial complex 
over $\mathbb{Z}_2$ coefficients is NP-hard for $k\geq 2$ 
(Chen and Freedman~\cite{CF10}), this result settles the open case for $k=1$. 

\subsection{Background and notations}
We use the concepts of homology groups, \v{C}ech and Rips
complexes from algebraic topology and geodesics from differential
geometry. We briefly discuss them and introduce relevant notations 
here; the readers can obtain the details from any standard
book on the topics such as~\cite{docarmo,Hatcher}.\\ 

\noindent
{\bf Homology groups and generators}: A homology group of a topological
space $\TT$ encodes its topological connectivity. We use 
$\homok(\TT)$ to denote its $k$-dimensional homology group
over the coefficients in $\mathbb{Z}_2$. Since $\mathbb{Z}_2$
is a field, $\homok(\TT)$ is a vector space of dimension $k$ and
hence admits a basis of size $k$. We are concerned with the
$1$-dimensional homology groups $\homo(\TT)$. The elements of
$\homo(\TT)$ are equivalent classes $[g]$ of $1$-dimensional
cycles $g$, also called {\em loops}. A set $\{[g_1],\ldots,[g_k]\}$
generating $\homo(\TT)$ is called its basis where $k=rank(\homo(\TT))$.
Simplifying the notation, we say $\{g_1,\ldots,g_a\}$ generate
$\homo(\TT)$ if $\{[g_1],\ldots,[g_a]\}$ generate $\homo(\TT)$
and is a basis if $a=rank(\homo(\TT))$. We assume that each loop $g$ in
$\TT$ is associated with a non-negative weight $w(g)$. The length of a set of
loops $G=\{g_1,\ldots,g_a\}$ is given by
$\length(G)=\Sigma_{i=1}^a w(g_i)$. A {\em shortest set of
generators} or a {\em shortest basis} of $\homo(\TT)$ 
is a basis $G$ of $\homo(\TT)$ where 
$\length(G)$ is minimal over all bases. 
When $\TT$ is a simplicial complex,
all loops are restricted to its $1$-skeleton.\\
 
\noindent
{\bf Complexes}:
Let $B(p,r)$ denote an open Euclidean $d$-ball centered
at $p$ with radius $r$. For a point set $P\subset \mathbb{R}^d$,
and a real $r>0$, the \v{C}ech complex
$\Cech^{r}(P)$ is a simplicial complex where a simplex
$\sigma\in \Cech^{r}(P)$ if and only if  
${\rm Vert(\sigma)}$, the vertices of $\sigma$,
are in $P$ and are the centers of $d$-balls of radius
$r/2$ which have a non-empty common intersection,
that is, $\cap_{p\in {\rm Vert}(\sigma)} B(p,r/2)\not=\emptyset$.
Instead of common intersection, if we only require
pairwise intersection among the $d$-balls, we get
the Rips complex $\Rips^{r}(P)$.
It is well known that the two complexes are related by a 
nesting property:

\begin{proposition}
For any finite set $P\subset \mathbb{R}^d$ and any $r\geq 0$, one has
$\Cech^{r}(P) \subseteq \Rips^{r}(P) \subseteq
\Cech^{2r}(P)$.
\end{proposition}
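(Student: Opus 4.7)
The plan is to verify the two inclusions separately, each by unwinding the definitions of the respective complexes and using an elementary geometric argument about Euclidean balls.

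For the first inclusion $\Cech^{r}(P) \subseteq \Rips^{r}(P)$, I would take a simplex $\sigma \in \Cech^{r}(P)$ and observe that by definition there exists a point $x \in \bigcap_{p \in \Vert(\sigma)} B(p, r/2)$. For any two vertices $p, q \in \Vert(\sigma)$, this same point $x$ witnesses that $B(p, r/2) \cap B(q, r/2) \neq \emptyset$, so all pairwise intersections are nonempty and $\sigma \in \Rips^{r}(P)$. This direction is essentially immediate since a common intersection point trivially serves as a pairwise intersection point.

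For the second inclusion $\Rips^{r}(P) \subseteq \Cech^{2r}(P)$, I would take $\sigma \in \Rips^{r}(P)$ and first translate the pairwise intersection condition into a diameter bound: if $B(p, r/2) \cap B(q, r/2) \neq \emptyset$ then by the triangle inequality $\|p - q\| < r$, so $\Vert(\sigma)$ has Euclidean diameter strictly less than $r$. Now I need to exhibit a point in $\bigcap_{p \in \Vert(\sigma)} B(p, r)$ (note the radius is $2r/2 = r$ in $\Cech^{2r}(P)$). The cleanest choice is simply to pick any vertex $p_0 \in \Vert(\sigma)$: since $\|p_0 - p\| < r$ for every $p \in \Vert(\sigma)$, we have $p_0 \in B(p, r)$ for every such $p$, so $p_0$ lies in the common intersection and $\sigma \in \Cech^{2r}(P)$.

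There is no real obstacle here; the only subtlety worth noting is the factor of two between the Rips diameter bound and the Čech radius, which is precisely what forces the ``$2r$'' rather than ``$r$'' on the right-hand side. One could tighten the constant via Jung's theorem (giving $\Cech^{r\sqrt{2d/(d+1)}}(P)$ in ambient dimension $d$), but this is not needed for the stated proposition and I would avoid that digression.
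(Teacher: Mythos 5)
Your proof is correct and is exactly the standard argument; the paper itself states this proposition without proof, citing it as well known. Both inclusions are handled properly (a common intersection point witnesses all pairwise intersections, and a diameter bound of $r$ lets any vertex serve as the common point for balls of radius $r$), so there is nothing to add beyond noting that the $r=0$ case is vacuous since both complexes are then empty.
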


\noindent
{\bf Geodesics}:
The vertex set $P$ of the simplicial complexes we consider is
a dense sample of a smooth compact manifold $M\subset\mathbb{R}^d$
without boundary. 
Assume that $M$ is isometrically embedded, that is, $M$ inherits the metric
from $\mathbb{R}^d$. For two points $p,q\in M$, a 
{\em geodesic} is a curve connecting $p$ and $q$ in $M$ whose
acceleration has no component in the tangent spaces
of $M$. Two points may have more than one geodesic among which
the ones with the minimum length are called {\em minimizing geodesics}. 
Since $M$ is compact, any two points admit a minimizing geodesic.
If $p$ and $q$ are close enough, this minimizing geodesic is unique,
which we denote as $\gamma(p,q)$. The lengths of minimizing
geodesics induce a distance metric $d_M:M\times M\rightarrow \mathbb{R}$
where $d_M(p,q)$ is the length of a minimizing geodesic between
$p$ and $q$. Clearly, $d(p,q)\leq d_M(p,q)$ where $d(p,q)$ is the
Euclidean distance. If $d(p,q)$ is small, Proposition~\ref{euclid-geod}
asserts that there is an upper bound on $d_M(p,q)$ in terms of $d(p,q)$. 
Our proof extends a result
in~\cite{BSW08} where Belkin et al. show 
the same result on a surface in $ \mathbb{R}^3$. 
See the Appendix for the proof. 
The {\em reach} $\rho(M)$ is defined as the minimum distance
between $M$ and its medial axis.

\begin{proposition}
If $d(p,q)\leq \rho(M)/2$, one has $d_M(p,q)\leq (1+\frac{4d^2(p, q)}{3\rho^2(M)})d(p, q)$.
\label{euclid-geod}
\end{proposition}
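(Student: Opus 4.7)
The plan is to bound the length of a minimizing geodesic from $p$ to $q$ by exploiting the fact that the reach $\rho(M)$ controls the acceleration of geodesics on $M$. Since $M$ is compact, there exists a unit-speed minimizing geodesic $\gamma\colon[0,L]\to M$ with $\gamma(0)=p$, $\gamma(L)=q$, and $L=d_M(p,q)$; the goal is to show $L\leq(1+\tfrac{4 d(p,q)^2}{3\rho(M)^2})\,d(p,q)$.

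First I would obtain a rough a priori bound $L\leq 2 d(p,q)$. Because $d(p,q)\leq\rho(M)/2$, every point of the Euclidean segment $[p,q]\subset\mathbb{R}^d$ lies within distance $\rho(M)/2$ of $M$, so the closest-point projection $\pi_M$ is well defined and smooth on $[p,q]$; by Federer's tubular-neighborhood estimate it has Lipschitz constant at most $\rho(M)/(\rho(M)-\rho(M)/2)=2$ there. Projecting $[p,q]$ onto $M$ therefore yields a curve in $M$ from $p$ to $q$ of length at most $2 d(p,q)\leq\rho(M)$, and hence $L\leq 2 d(p,q)$.

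Next I would use the standard consequence of reach, namely that for every unit tangent vector $v$ at every point of $M$, the second fundamental form satisfies $\|\mathrm{II}(v,v)\|\leq 1/\rho(M)$. Since $\gamma$ is a unit-speed geodesic, $\gamma''(t)=\mathrm{II}(\gamma'(t),\gamma'(t))$ and so $\|\gamma''(t)\|\leq 1/\rho(M)$; integrating gives $\|\gamma'(s)-\gamma'(t)\|\leq|s-t|/\rho(M)$, whence $\gamma'(s)\cdot\gamma'(t)\geq 1-(s-t)^2/(2\rho(M)^2)$. A direct double-integral estimate then produces
\begin{equation*}
d(p,q)^2 \;=\; \Bigl\|\int_0^L\gamma'(t)\,dt\Bigr\|^2 \;=\; \int_0^L\!\!\int_0^L\gamma'(s)\cdot\gamma'(t)\,ds\,dt \;\geq\; L^2-\frac{L^4}{12\,\rho(M)^2}.
\end{equation*}
Rearranging, $L^2\leq d(p,q)^2/(1-L^2/(12\rho(M)^2))$. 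Plugging in the a priori bound $L\leq 2 d(p,q)$ gives $L^2/(12\rho(M)^2)\leq d(p,q)^2/(3\rho(M)^2)\leq 1/12$, and elementary manipulations (using $1/(1-x)\leq 1+2x$ for $x\leq 1/2$ together with $\sqrt{1+y}\leq 1+y/2$) yield $L\leq d(p,q)(1+d(p,q)^2/(3\rho(M)^2))$, which is comfortably inside the target factor $1+\tfrac{4 d(p,q)^2}{3\rho(M)^2}$.

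The main obstacle is really the a priori bound in the second step: without it, the inequality $d^2\geq L^2-L^4/(12\rho^2)$ cannot be inverted on the desired branch, since $L\mapsto L^2-L^4/(12\rho^2)$ is non-monotone. This is precisely where the hypothesis $d(p,q)\leq\rho(M)/2$ is used---it forces the chord $[p,q]$ into the tubular neighborhood of radius $\rho(M)/2$, where $\pi_M$ is Lipschitz with a universal constant. The extension from the $\mathbb{R}^3$-surface case of \cite{BSW08} to general submanifolds of $\mathbb{R}^d$ is essentially transparent, since both the projection estimate and the curvature bound $\|\gamma''\|\leq 1/\rho$ hold verbatim for any smooth submanifold with positive reach.
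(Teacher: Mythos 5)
Your proof is correct and follows essentially the same route as the paper's: both arguments bound the second fundamental form by $1/\rho(M)$, deduce that the unit tangent of the minimizing geodesic drifts by at most $t/\rho(M)$ over arclength $t$, and close the estimate using the a priori bound $d_M(p,q)\le 2d(p,q)$. The only differences are minor: the paper compares arc to chord via the single projection $\int u_t\cdot u_p\,dt\ge l-l^3/(6\rho^2)$ and cites \cite{NSW06} for the factor-$2$ bound, whereas you square the chord with a double integral ($d^2\ge L^2-L^4/(12\rho^2)$) and derive the factor-$2$ bound yourself from Federer's Lipschitz estimate for the nearest-point projection -- both steps are sound and in fact give a slightly sharper constant.
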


\noindent
{\bf Convexity radius and sampling}:
For a point $p\in M$,
the set of all points $q$ with $d_M(p,q)< r$ 
form $p$'s {\em geodesic ball} $B_M(p,r)$
of radius $r$.
It is known that
there is a positive real $r_p$ for each point
$p\in M$ so that $B_M(p,r_p)$ is {\em convex} in the sense
that the minimizing geodesics between any two points
in $B_M(p,r_p)$ lie in $B_M(p,r_p)$. 
The {\em convexity radius} of $M$ is $\rho_c(M)=\inf_{p\in M} r_p$. 
We use Euclidean distances to define the sampling density.
We say a discrete set $P\subset M$ is an $\eps$-sample\footnote{
Here $\eps$-sample is not defined relative to reach
or feature size as commonly done in reconstruction literature
~\cite{AB98,CL06,CDR05}.} of
$M$ if $B(x,\eps)\cap P\not=\emptyset$ for each point $x\in M$.
 
\subsection{Main results}
We present an algorithm that
computes a set of loops
$G=\{g_1,\ldots,g_k\}$ from an $\eps$-sample $P$ of $M$ and a parameter
$r>0$ whose total length is within a factor 
of the total length of a shortest basis 
in $\homo(M)$. The factor depends on $\eps$, $r$, and $\rho(M)$.

\begin{theorem}
Let $M\subset \mathbb{R}^d$ be a smooth, closed  manifold 
with $\ell$ as the length
of a shortest basis of $\homo(M)$. 
Given a set $P\subset M$ of $n$ points
which is an $\eps$-sample of $M$ and
$4\eps\leq r \leq \min\{\frac{1}{2}\sqrt{\frac{3}{5}}\rho(M), 
\rho_c(M) \}$,
one can compute a set of loops $G$
in $O(n(n+n_e)^2(n_e+n_t))$ time where 
$$\frac{1}{1+\frac{4r^2}{3\rho^2(M)}}\ell\leq\length(G)\leq 
(1+\frac{4\eps}{r})\ell, \mbox{ and}$$ $n_e$ and $n_t$ are the
numbers of edges and triangles respectively 
in the Rips complex $\Rips^{2r}(P)$.
\label{thm:main}
\end{theorem}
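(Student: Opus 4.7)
The overall plan is to construct the Rips complex $\Rips^{2r}(P)$ with each edge $(p,q)$ weighted by its Euclidean length $d(p,q)$, and then invoke the polynomial-time algorithm for computing a shortest basis of $\homo$ on an edge-weighted simplicial complex (advertised in the introduction as a separate contribution) to produce $G$. The remaining work splits into (a) establishing a canonical identification between $\homo(\Rips^{2r}(P))$ and $\homo(M)$ so that bases transfer in either direction, and (b) showing that the shortest basis of the weighted Rips complex matches a shortest basis of $\homo(M)$ in length up to the stated factors. The running-time bound then follows from the subroutine's complexity applied to a complex with $n$ vertices, $n_e$ edges, and $n_t$ triangles.

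For the homology identification I would use the Rips--\v{C}ech interleaving (Proposition~1.1) together with a deformation-retraction argument of Niyogi--Smale--Weinberger type: under $\eps\le r/4$ and $2r\le\sqrt{3/5}\,\rho(M)$, both $\bigcup_{p\in P}B(p,r)$ and $\bigcup_{p\in P}B(p,2r)$ deformation-retract onto $M$, so by the nerve lemma $\Cech^{r}(P)$ and $\Cech^{2r}(P)$ are homotopy equivalent to $M$. Chasing this through the interleaving produces inclusion-induced maps $\mapone:\homo(\Rips^{2r}(P))\to\homo(M)$ and $\maptwo:\homo(M)\to\homo(\Rips^{2r}(P))$ whose composition is the identity; both carry bases to bases.

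For the upper bound, take a shortest basis $\{\gamma_1,\ldots,\gamma_k\}$ of $\homo(M)$. Along each $\gamma_i$ I select sample points $p_1,\ldots,p_m\in P$ at geodesic spacing at most $r-2\eps$ (possible because $P$ is an $\eps$-sample and $r\ge 4\eps$). A triangle-inequality calculation gives $d(p_j,p_{j+1})\le r$ and $\sum_j d(p_j,p_{j+1})\le (1+4\eps/r)\length(\gamma_i)$, so the closed polygon $\tilde\gamma_i$ is a loop in $\Rips^{r}(P)\subseteq\Rips^{2r}(P)$. Since $[\tilde\gamma_i]=\maptwo([\gamma_i])$, the loops $\{\tilde\gamma_i\}$ form a basis of $\homo(\Rips^{2r}(P))$, so the computed shortest basis $G$ has total weight at most $(1+4\eps/r)\ell$.

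For the lower bound I convert $G$ back to $M$ by replacing each edge $(p,q)$ of each $g_i$ with the unique minimizing geodesic $\gamma(p,q)$ (well defined since $d(p,q)\le 2r\le\rho_c(M)$); the resulting loops $\{\hat g_i\}$ satisfy $[\hat g_i]=\mapone([g_i])$, hence generate $\homo(M)$, so $\sum_i\length(\hat g_i)\ge\ell$. The main obstacle is the per-edge length control in this step: Proposition~\ref{euclid-geod} gives a per-edge expansion factor $1+4d^2/(3\rho^2(M))$ which, applied naively with $d\le 2r$, would only yield the weaker $1+16r^2/(3\rho^2(M))$. Obtaining the advertised $1+4r^2/(3\rho^2(M))$ seems to require showing that an optimal basis of $\Rips^{2r}(P)$ can be realized using only edges of Euclidean length at most $r$ --- e.g.\ by subdividing any longer edge through an $\eps$-close $P$-sample of its midpoint (which sits in $\Rips^{2r}(P)$ thanks to $\eps\le r/4$) and checking that the refinement preserves both the homology class and the total length to within lower-order terms. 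Once this short-edge normalization is in place, a per-edge application of Proposition~\ref{euclid-geod} produces the claimed lower bound.
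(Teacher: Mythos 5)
Your plan founders at the very first step: you propose to compute a shortest basis of $\homo(\Rips^{2r}(P))$ and to identify this group with $\homo(M)$ via the interleaving, asserting that the inclusion-induced maps $\mapone:\homo(\Rips^{2r}(P))\to\homo(M)$ and $\maptwo:\homo(M)\to\homo(\Rips^{2r}(P))$ ``carry bases to bases.'' They do not. The interleaving $\Cech^{2r}(P)\hookrightarrow\Rips^{2r}(P)\hookrightarrow\Cech^{4r}(P)$ only shows that the isomorphism $\homo(\Cech^{2r}(P))\cong\homo(\Cech^{4r}(P))$ factors through $\homo(\Rips^{2r}(P))$; hence $\homo(M)$ is a direct summand of $\homo(\Rips^{2r}(P))$, but the Rips complex at a single scale can carry additional (possibly very short) spurious classes that die only in $\Cech^{4r}(P)$. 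Consequently a shortest basis of $\homo(\Rips^{2r}(P))$ may have more than $k=\rank\homo(M)$ elements, your loops $\tilde\gamma_i$ span only a $k$-dimensional subspace and so do not form a basis of $\homo(\Rips^{2r}(P))$ (killing the upper bound), and the output set $G$ is not even of the right cardinality. This is precisely the difficulty the paper's algorithm is built to circumvent: it computes a shortest basis of the \emph{persistent} group $\homo^{r,2r}(\Rips(P))=\mathrm{image}(\homo(\Rips^{r}(P))\to\homo(\Rips^{2r}(P)))$, which \emph{is} isomorphic to $\homo(M)$, by assigning weight $W$ (exceeding $k$ times the total weight of $\Rips^{r}(P)$) to every edge of $\Rips^{2r}(P)\setminus\Rips^{r}(P)$ and retaining only the $k$ shortest loops of the resulting shortest basis (Theorem~\ref{thm:SPgenerators}).

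The secondary difficulty you flag in the lower bound also dissolves under the correct setup, and your proposed repair does not work as stated. Because the $k$ retained loops necessarily lie in $\Rips^{r}(P)$ (else their weight would exceed $W$), and a shortest basis there coincides with one for $\homo(\Cech^{r}(P))$ (Proposition~\ref{cech-rips-gen}), every edge appearing in $G$ has Euclidean length at most $r$, so Proposition~\ref{euclid-geod} applies per edge with $d\le r$ and yields the factor $1+\frac{4r^2}{3\rho^2(M)}$ directly. Your subdivision trick, by contrast, can only \emph{increase} the total length of the loops being pushed to $M$ (the triangle inequality goes the wrong way), so it would degrade rather than preserve the stated constant. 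Finally, even in the correct framework, establishing that the pushed-forward loops represent the intended classes is not automatic: it requires the tubular-neighborhood and convexity-radius argument of Propositions~\ref{tub-prop} and~\ref{piece-homotopy}, together with the explicit control $h(\sigma)\subset\cup_{p\in\Vert(\sigma)}B(p,r)$ on the nerve-lemma equivalence (Proposition~\ref{cech-cover}), none of which your sketch supplies.
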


The above result suggests that
$\lim_{\frac{\eps}{r},r\rightarrow 0}\length(G)\rightarrow \ell$.
To make $\frac{\eps}{r}$ and $r$ simultaneously 
approaching $0$, one may take $r=O(\sqrt \eps)$ and let $\eps\rightarrow 0$. 
We note that $n_e=O(n^2)$ and $n_t=O(n^3)$
giving an $O(n^{8})$ worst-case complexity
for the algorithm. However, if $r=\Theta(\eps)$
and points in $P$ have $\Omega(\eps)$ pairwise distance,
$n_e$ and $n_t$ reduce to $O(n)$ by a result of~\cite{CO08}.
In this case we get a time complexity of $O(n^4)$.
In arriving at Theorem~\ref{thm:main}, we also prove the following
result which is of independent interest.

\begin{theorem}
Let $\KK$ be a finite simplicial complex with non-negative 
weights on edges. A shortest basis
for $\homo(\KK)$ can be computed in $O(n^4)$ time
where $n$ is the size of $\KK$.
\label{thm:main2}
\end{theorem}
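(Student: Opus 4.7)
The plan is to reduce the problem to selecting a minimum-weight basis from a polynomial-size family of candidate loops and then extract such a basis greedily. For each vertex $v$ of $\KK$ I would first run Dijkstra in the weighted $1$-skeleton to obtain a shortest-path tree $T_v$, and for every edge $e=uw$ of $\KK$ form the canonical loop $\aloop(v,e) := \pi_v(v,u)\concatenate e \concatenate \pi_v(w,v)$, where $\pi_v(\cdot,\cdot)$ denotes the path in $T_v$. The resulting family $\mathcal{C}$ has size $O(n^2)$ and contains loops of combinatorial length at most $n$.

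The heart of the argument is the structural claim that $\mathcal{C}$ contains a shortest basis of $\homo(\KK)$. I would prove this by an exchange argument modelled on the Erickson--Whittlesey surface result but adapted to arbitrary simplicial complexes: starting from any shortest basis $G^* = \{g_1,\ldots,g_k\}$ ordered by decreasing weight, I replace each $g_i$ by a canonical loop of no larger weight while preserving linear independence in $\homo(\KK;\mathbb{Z}_2)$. The weight bound follows by choosing an edge $e$ on $g_i$ and a basepoint $v$ on $g_i$ such that the two $T_v$-paths forming $\aloop(v,e)$ are no longer than the two arcs of $g_i$ they replace, using only that $T_v$ stores shortest paths in the $1$-skeleton. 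Independence is maintained via the matroid exchange axiom on the vector matroid of $\homo(\KK;\mathbb{Z}_2)$. This structural lemma is the principal obstacle, because independence must be checked modulo the boundaries of \emph{all} $2$-simplices of $\KK$ rather than merely modulo triviality in the graph cycle space, and the choice of basepoint must be justified so that the weight bound survives this finer quotient.

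Given the structural lemma, the theorem follows from the matroid greedy procedure: sort $\mathcal{C}$ by increasing weight and add each canonical loop to the tentative basis iff it is $\mathbb{Z}_2$-independent of the already-chosen loops in $\homo(\KK;\mathbb{Z}_2)$. For the $O(n^4)$ running time, I would first reduce the triangle--edge boundary matrix $\partial_2$ by Gaussian elimination in $O(n^3)$ to obtain an explicit basis of $1$-boundaries; all $n$ shortest-path trees together cost $O(n^3)$; sorting $\mathcal{C}$ is $O(n^2\log n)$; and each independence test reduces the $O(n)$-sparse edge-indicator vector of a candidate modulo the precomputed boundaries and the cycles already accepted, in $O(n^2)$ time. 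With $O(n^2)$ candidates, the total cost is $O(n^4)$, matching the bound in Theorem~\ref{thm:main2}.
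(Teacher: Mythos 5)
Your proposal follows essentially the same route as the paper: canonical loops $c_v(e)$ built from shortest-path trees rooted at every vertex, the tightness/exchange argument (lifted from Erickson--Whittlesey, whose proof uses only the $\mathbb{Z}_2$ vector-space structure of $\homo(\KK)$ and so carries over verbatim to complexes) showing this $O(n^2)$-size family contains a shortest basis, and greedy selection justified by the matroid property. The only substantive difference is the mechanics of the independence test: the paper first prunes each $C_p$ to a greedy subset $G_p$ via a persistence filtration ({\sc CanonGen}) and then tests independence by a sealing trick (coning off the accepted loops and continuing the persistence computation), whereas you maintain a column-echelon basis of $\mathrm{image}(\partial_2)$ together with the accepted cycles and reduce each candidate against it in $O(n^2)$; both yield $O(n^4)$, and your linear-algebra variant is the more elementary of the two. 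The structural lemma you correctly single out as the crux is exactly what the paper asserts in its Propositions 2.2--2.3 (with the same unaddressed subtlety that, when shortest paths are not unique, a tight loop need not literally equal $c_v(e)$ for the \emph{fixed} tree $T_v$), so your sketch is no less complete than the paper's on that point.
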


\section{Algorithm description} 
The algorithm that we propose proceeds as follows. 
We compute a Rips complex $\Rips^{2r}(P)$
out of the given point cloud $P\subset M$.
Next, we compute the rank $k$ of $\homo(M)$ by considering
the persistent homology group
\[
\homo^{r,2r}(\Rips(P)) = \mbox{ image } \iota_*
\]
where the inclusion $\iota: \Rips^{r}(P)\hookrightarrow
\Rips^{2r}(P)$ induces the homomorphism
$\iota_*: \homo(\Rips^{r}(P))
\rightarrow \homo(\Rips^{2r}(P))$. 
It is known that the rank of $\homo^{r,2r}(\Rips(P))$ 
coincides with that
of $\homo(M)$ for appropriate $r$.

We show that a shortest basis of $\homo^{r,2r}(\Rips(P))$
approximates a shortest basis of $\homo(M)$. Therefore, we aim 
to compute a shortest basis of $\homo^{r,2r}(\Rips(P))$
from $\Rips^{r}(P)$ and $\Rips^{2r}(P)$. To accomplish this,
the algorithm augments $\Rips^{2r}(P)$ by putting a weight $w(e)$
on each edge $e\in\Rips^{2r}(P)$. The weights are of two types:
either they are the lengths of the edges, or a very large value $W$
which is larger than $k$ times the total weight of $\Rips^{r}(P)$.
Precisely we set 
\[
w(e)= \left\{\begin{array}{ll}
 \mbox{length of $e$} & \mbox{if $e\in \Rips^{r}(P)$}\\
 W & \mbox{if $e\in\Rips^{2r}(P)\setminus \Rips^{r}(P)$}.
\end{array}
\right.
\]
Let the complex $\Rips^{2r}(P)$ augmented with weights be
denoted as $\Rips^{2r +}(P)$.
A shortest basis of $\homo(\Rips^{2r+}(P))$ does not
necessarily form a shortest basis of $\homo^{r,2r}(\Rips(P))$.
However, the first $k$ loops sorted according to lengths in 
a shortest basis of $\homo(\Rips^{2r+}(P))$ form a 
shortest basis of $\homo^{r,2r}(\Rips(P))$. We
give an algorithm to compute a shortest basis for any
simplicial complex which we apply to $\Rips^{2r+}(P)$.


Since we are interested in computing the generators of the first
homology group, it is sufficient to consider all simplices up to dimension
two, that is, only vertices, edges, and triangles in the simplicial
complexes that we deal with. Henceforth, we assume that all complexes
that we consider have simplices up to dimension two.

\subsection{Computing loops}
We will prove later that a shortest basis for
$\homo^{r,2r}(\Rips(P))$ indeed approximates a shortest basis for
$\homo(M)$. The algorithm \mainAlg computes them.
\begin{algorithm}[h!]
\floatname{algorithm}{Algorithm}
\caption{ {\mainAlg($P,r$)} }
\begin{algorithmic}[1]
   \STATE Compute the Rips complex $\Rips^{2r}(P)$
	and a weighted complex $\Rips^{2r+}(P)$ 
from it as described.
     \STATE Compute the rank $k$ of $\homo^{r,2r}(\Rips(P))$
	by the persistence algorithm.
	\STATE  Compute a shortest basis for $\homo(\Rips^{2r+}(P))$.	
       \STATE Return the first $k$ smallest loops from this shortest basis.
\end{algorithmic}
\end{algorithm}

\begin{theorem}
The algorithm {\sc ShortLoop}$(P,r)$ computes a shortest basis 
for the persistent homology group 
$\homo^{r,2r}(\Rips(P))$. 
\label{thm:SPgenerators}
\end{theorem}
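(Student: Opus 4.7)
The plan is to analyze the sorted shortest basis $B = (g_1, \ldots, g_m)$ of $\homo(\Rips^{2r+}(P))$ produced in Step~3, where $w(g_1) \le \cdots \le w(g_m)$, and to show that its first $k$ loops solve the persistent shortest basis problem. Since weights leave the chain groups unaffected, $\homo(\Rips^{2r+}(P)) = \homo(\Rips^{2r}(P))$ as $\mathbb{Z}_2$-vector spaces; within this, $\homo^{r,2r}(\Rips(P))$ is the $k$-dimensional subspace of classes representable by loops in $\Rips^{r}(P)$. I aim to establish three facts: (a)~each $g_i$ with $i \le k$ lies entirely in $\Rips^{r}(P)$; (b)~the corresponding classes form a basis of $\homo^{r,2r}(\Rips(P))$; and (c)~this basis minimizes total weight over all bases of $\homo^{r,2r}(\Rips(P))$.

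Claim~(a) is the crux, and I would prove it by contradiction. If $w(g_k) \ge W$, then all of $g_k, \ldots, g_m$ contain a heavy edge. These $m-k+1$ classes project into the quotient $\homo(\Rips^{2r+}(P))/\homo^{r,2r}(\Rips(P))$ of dimension only $m-k$, so they must be linearly dependent; a dimension count over $\mathbb{Z}_2$ produces a nontrivial subset $S$ of heavy basis loops whose sum is homologous to some loop $v \subset \Rips^{r}(P)$ with $w(v) \le T$, where $T$ is the total length of all edges in $\Rips^{r}(P)$. Replacing any $g_j \in S$ by $v$ still yields a basis of $\homo(\Rips^{2r+}(P))$ by Steinitz exchange, but its total weight has decreased by at least $W - T > 0$, contradicting the minimality of $w(B)$. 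Hence each of $g_1,\ldots,g_k$ avoids heavy edges, giving~(a). Claim~(b) is then immediate: these $k$ loops realize classes inside $\homo^{r,2r}(\Rips(P))$, remain linearly independent as members of $B$, and $\dim \homo^{r,2r}(\Rips(P)) = k$.

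For~(c), let $\{h_1, \ldots, h_k\}$ be any basis of $\homo^{r,2r}(\Rips(P))$; since every class in the image of $\iota_*$ lifts to $\homo(\Rips^{r}(P))$, I may assume each $h_i$ is itself a loop inside $\Rips^{r}(P)$ (any representative using a heavy edge is noncompetitive anyway). The same dimension argument as in~(a) shows $\{g_{k+1}, \ldots, g_m\}$ projects to a basis of the quotient $\homo(\Rips^{2r+}(P))/\homo^{r,2r}(\Rips(P))$, so $\{h_1, \ldots, h_k, g_{k+1}, \ldots, g_m\}$ is itself a basis of $\homo(\Rips^{2r+}(P))$. Comparing its weight with the minimal $w(B)$ and cancelling the common terms yields $\sum_{i=1}^k w(g_i) \le \sum_{i=1}^k w(h_i)$, the desired shortness. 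The hard part will be the exchange-plus-pigeonhole step in~(a); the critical quantitative input is that $W$ strictly exceeds $T$ (the algorithm's choice $W > kT$ is more than sufficient), which is exactly what turns the swap into a weight decrease. Everything else reduces to routine $\mathbb{Z}_2$-linear algebra together with the basic topological fact that image classes admit loop representatives inside $\Rips^{r}(P)$.
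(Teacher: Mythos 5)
Your proposal is correct and follows essentially the same route as the paper: use the size of $W$ to force the $k$ smallest basis loops into $\Rips^{r}(P)$, then conclude they form a shortest basis of $\homo^{r,2r}(\Rips(P))$ by an exchange argument against the minimality of the basis of $\homo(\Rips^{2r+}(P))$. Your version is in fact more careful than the paper's — the quotient-dimension count and the single-element Steinitz exchange make rigorous the paper's informal claim that the heavy loops "can be replaced," and your observation that $W>T$ (rather than $W>kT$) already suffices is accurate.
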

\begin{proof}
Let $g_1,\ldots,g_{a}$ be the set of generators
sorted according to the non-decreasing lengths which
are computed in step 3. They generate $\homo(\Rips^{2r+}(P))$.
Out of these generators the algorithm outputs the
first $k$ generators $g_1,\ldots,g_k$. 
Since $k$ is the rank of $\homo^{r,2r}(P)$
there are $k$ independent generators in $\homo(\Rips^{r}(P))$
which remain independent in $\homo(\Rips^{2r+}(P))$.
We claim that the loops $g_1,\ldots,g_k$ reside in $\Rips^{r}(P)$.
For if they do not, the sum of their lengths would be more
than $W$ which is $k$ times larger than the
total weight of $\Rips^{r}(P)$.
Then, we can argue that any $k$ independent set of loops from
$\Rips^{r}(P)$ which remain independent
in $\homo(\Rips^{2r+}(P))$ can replace $g_1,\ldots,g_k$ to have
a smaller length so that $g_1,\ldots,g_{a}$ could not
be a shortest basis of $\homo(\Rips^{2r+}(P))$.

The above argument implies that $g_1,\ldots,g_k$ is a basis of
$\homo^{r,2r}(P)$. If it is not a shortest basis,
it can be replaced by a shorter one so that again we would
have a basis of $\homo(\Rips^{2r+}(P))$ which is shorter than the
one computed. This is a contradiction.
\end{proof}\\

It remains to show how to 
compute a shortest basis of $\homo(\Rips^{2r+}(P))$
in step 3 of \mainAlg.

\subsection{Shortest basis}
\label{sploop-sec}
Let $\KK$ be any finite simplicial complex embedded in $\mathbb{R}^d$
whose edges have non-negative weights.
To compute a shortest basis for $\homo(\KK)$ we make use of the fact that
$\homo(\KK)$ is a vector space as we restrict ourselves to $\mathbb{Z}_2$ 
coefficients. For such cases, Erickson and Whittlesey~\cite{EW05}
observed that if a set of loops $\mathcal{L}$ in $\KK$ 
contains a shortest basis, then
the greedy set $G$ chosen from $\mathcal{L}$ is a shortest basis. 
The greedy set 
$G$ of $\mathcal{L}$
is an {\it ordered} set of loops $\{g_1,\ldots \g_k\}$, 
$k={\mathrm rank}\,\homo(\KK)$, satisfying the following
condition. The first element $g_1$ is 
the shortest loop in $\mathcal{L}$ which is nontrivial in $\homo(\KK)$. Suppose $g_1, \ldots, \g_i$ 
have already been defined in the set $G$. 
The next chosen loop $\g_{i+1}$ is the shortest loop
in $\mathcal{L}$ which is independent of $g_1,\ldots, g_i$, that is, $[\g_{i+1}]$ cannot be written 
as a linear combination of $[\g_1],...,[\g_i]$.
The check for independence is a costly step in this greedy
algorithm which we aim to reduce.
We construct a set of {\em canonical} loops which contains
a basis of $\homo(\KK)$. This set is pruned by 
a persistence based algorithm before applying the
greedy algorithm. 

\subsubsection{Canonical loops}
We start with citing a result of Erickson and Whittlesey~\cite{EW05}.
A simple cycle $L$ is {\em tight} if it 
contains a shortest path between every pair 
of points in $L$.  
\begin{proposition}
With non-negative weights, every loop in a 
shortest basis of $\homo(\KK)$ is tight.
\label{lem:tight}
\end{proposition}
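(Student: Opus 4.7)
The plan is to argue by contradiction: suppose some loop in a shortest basis is not tight, and exhibit a strictly shorter basis by an exchange argument, using only the $\Z_2$ vector-space structure of $\homo(\KK)$.

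First I would fix a shortest basis $B = \{g, h_2, \ldots, h_k\}$ of $\homo(\KK)$ and assume toward contradiction that $g$ fails to be tight. By the definition of tightness, there must exist two vertices $u, v$ lying on $g$ such that no sub-arc of $g$ from $u$ to $v$ is a shortest $u$-$v$ path in $\KK$. The loop $g$ is split by $u$ and $v$ into two arcs $\alpha_1$ and $\alpha_2$ with $\length(\alpha_1) + \length(\alpha_2) = \length(g)$, and if $\pi$ denotes any shortest $u$-$v$ path in $\KK$ then the failure of tightness forces the strict inequalities $\length(\pi) < \length(\alpha_1)$ and $\length(\pi) < \length(\alpha_2)$.

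Next I would form the two loops $g' = \pi \concatenate \alpha_2$ and $g'' = \pi \concatenate \alpha_1$, and observe the key $\Z_2$-identity $[g] = [g'] + [g'']$, which holds because $g' + g'' = \alpha_1 + \alpha_2 + 2\pi = g$ as $1$-chains mod $2$. The two strict inequalities above yield $\length(g') < \length(g)$ and $\length(g'') < \length(g)$, so both candidates are cheaper than $g$.

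The main obstacle, and the step I would treat most carefully, is the exchange: I need one of $g', g''$ to be independent of $h_2, \ldots, h_k$ so that it can legitimately replace $g$ in $B$. If $[g']$ lies in the span of $[h_2], \ldots, [h_k]$, then from $[g''] = [g] + [g']$ and the fact that $[g]$ is independent of $[h_2], \ldots, [h_k]$, I conclude that $[g'']$ is independent of $[h_2], \ldots, [h_k]$; the symmetric conclusion holds if the roles are reversed. Thus at least one of $\{g', h_2, \ldots, h_k\}$ or $\{g'', h_2, \ldots, h_k\}$ is a basis of $\homo(\KK)$, and since the replacement loop is strictly shorter than $g$ while all other loops are unchanged, this new basis has strictly smaller total length than $B$. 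This contradicts the minimality of $B$ and completes the proof.
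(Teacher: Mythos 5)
Your argument is correct. Note that the paper itself offers no proof of this proposition: it is stated as a citation of Erickson and Whittlesey \cite{EW05}, and your exchange argument is essentially the standard proof from that reference. The negation of tightness is handled correctly (for some pair $u,v$ on $g$, \emph{neither} of the two arcs $\alpha_1,\alpha_2$ is a shortest $u$--$v$ path, giving the two strict inequalities $\length(\pi)<\length(\alpha_i)$), the identity $[g]=[g']+[g'']$ over $\mathbb{Z}_2$ is right, and the case analysis showing that at least one of $g',g''$ can replace $g$ while keeping a basis is exactly the right linear-algebra step. The only point worth making explicit is that tightness is defined only for \emph{simple} cycles, so your two-arc decomposition presupposes that a loop in a shortest basis is simple; a non-simple loop would first have to be split at a repeated vertex into two closed subwalks (one of which must remain independent of the rest of the basis and is no longer than the original), which is a routine preliminary reduction but is not stated in your write-up.
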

To collect all tight loops, we consider the canonical 
loops defined as follows. Let 
$T$ be a {\em shortest path tree} in  $\KK$ rooted at $p$. Notice that
we are not assuming $T$ to be unique but it is fixed once computed. For any two nodes $q_1,q_2\in P$, let $\SP_T(q_1,q_2)$ 
denote the unique path from $q_1$ to $q_2$ in $T$. Let $E_{T}$ be the set of edges in $T$. 
Given a non-tree edge $e = (q_1, q_2) \in E\setminus E_{T}$, 
define the \emph{canonical loop} of $e$ with respect to $p$, $c_p(e)$ in short, 
as the loop formed by 
concatenating $\SP_T(p,q_1)$, $e$, and $\SP_T(q_2,p)$, that is,
$$c_p(e)=\SP_T(p,q_1) \concatenate e \concatenate \SP_T(q_2,p).$$
Let $C_p$ be the set of all canonical loops with respect to $p$, 
i.e., $C_p = \{c_p(e): e\in E\setminus E_T\}$.
Then we have the following easy consequence.
\begin{proposition}
$\cup_{p\in P}C_p$ contains all tight loops.
\label{lem:tight1}
\end{proposition}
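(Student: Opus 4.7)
The plan is to show that for every tight loop $L$, there exist a vertex $p \in L$ and an edge $e = (q_1, q_2) \in L$ such that, with respect to a suitable shortest-path tree $T$ rooted at $p$, the edge $e$ is a non-tree edge and $L = c_p(e)$. This will place $L$ in $C_p \subseteq \bigcup_{p \in P} C_p$.

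First I would pick any vertex $p$ on $L$, traverse $L$ in one direction while accumulating edge weights, and let $q_1$ be the last vertex whose arc-distance from $p$ is at most $\length(L)/2$. Let $e = (q_1, q_2)$ be the next edge on $L$. Then the arc $\pi_1$ from $p$ to $q_1$ on this side, and the arc $\pi_2$ from $q_2$ back to $p$ on the other side, both have length at most $\length(L)/2$, and $L = \pi_1 \concatenate e \concatenate \pi_2$. Tightness of $L$ now forces both $\pi_1$ and $\pi_2$ to be shortest paths in $\KK$: by definition $L$ contains a shortest path between $p$ and $q_1$, and since this path is one of the two arcs while the longer arc has length at least $\length(\pi_1)$, the shorter arc $\pi_1$ must itself realize the shortest-path distance. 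The same argument applies to $\pi_2$.

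It then remains to realize $L$ as a canonical loop. Since every prefix of $\pi_1$ (and of $\pi_2$) starting at $p$ is itself a shortest path from $p$, the union $\pi_1 \cup \pi_2$ is a shortest-path sub-tree rooted at $p$. I would extend it to a full shortest-path tree $T$ of $\KK$ rooted at $p$, which is always possible, for instance by running Dijkstra's algorithm with a tie-breaking rule that preserves the existing sub-tree. The edge $e$ must then be a non-tree edge since $\pi_1 \cup \pi_2 \cup \{e\} = L$ is a cycle, and the canonical loop with respect to $T$ is $c_p(e) = \SP_T(p, q_1) \concatenate e \concatenate \SP_T(q_2, p) = \pi_1 \concatenate e \concatenate \pi_2 = L$.

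The hard part, and the main obstacle I anticipate, is precisely this last step: the proposition treats the shortest-path tree at each vertex as fixed in advance, whereas my argument chooses $T$ to be compatible with the specific arcs of $L$. When $\KK$ admits several equally-short paths from $p$, a different pre-fixed choice of $T$ could produce a canonical loop distinct from $L$. Addressing this cleanly requires either showing that for every tight loop there is some root $p$ whose pre-fixed tree happens to be compatible, or weakening the assertion to ``$\bigcup_p C_p$ contains a shortest basis of $\homo(\KK)$,'' which is all the subsequent greedy algorithm actually needs.
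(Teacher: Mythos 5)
Your construction is correct and is precisely the standard argument (going back to Erickson and Whittlesey) that the paper leaves implicit by presenting the proposition as an ``easy consequence'': cut the tight loop $L$ at the edge antipodal to a chosen vertex $p$, use tightness to conclude both resulting arcs realize shortest-path distances from $p$, and extend their union to a shortest-path tree so that $L=c_p(e)$. The caveat you flag at the end is a genuine subtlety that the paper glosses over --- with $T$ fixed in advance the statement can fail when shortest paths are non-unique --- but it is harmless here, since either one assumes compatible tie-breaking (or generic weights), or one weakens the claim to ``$\bigcup_p C_p$ contains a shortest basis,'' which, as you observe, is all the subsequent greedy argument requires.
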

Therefore $\cup_{p\in P}C_p$ is a set 
of loops from which the greedy set can be selected. 
However, $\cup_{p\in P}C_p$ can be a very large set containing  
possibly many trivial loops which result into many unnecessary
independence checks. 
To remedy this, we identify the greedy set $G_p$ of $C_p$ and choose the
greedy set from the union $\cup_{p\in P}G_p$ instead of $\cup_{p\in P}C_p$. 
It turns out that $G_p$ can be computed by a persistence based
algorithm thereby avoiding explicit independence checks.

If the lengths of the loops in $C_p$ are distinct, the greedy set $G_p$
is unique. However, in presence of equal length loops
we need a mechanism to break ties. 
For this we introduce the notion of {\it canonical order}.  
We assign a unique number $\nu(e)$ between $1$ to $m$ to each 
non-tree edge $e$ if there are $m$ of them. 
For any two non-tree edges $e$ and $e'$, let 
$e < e'$ if and only if either $\length(c_p(e))< \length(c_p(e'))$,
or $\length(c_p(e))=\length(c_p(e'))$ and $\nu(e)< \nu(e')$.
The total order imposed by `$<$' provides the canonical order
\[
e_1 < e_2<\ldots < e_m.
\]
Based on this canonical order, we form the greedy set $G_p$ of $C_p$
as described in the beginning of Section~\ref{sploop-sec}. 

Below we argue that $\cup_{p\in P}G_p$ is good for our purpose 
and each set $G_p$ can be computed based on the persistence algorithm.  

\begin{proposition}
The greedy set chosen from $\cup_{p\in P}G_p$ is a shortest basis 
of $\homo(\KK)$.
\label{greedy}
\end{proposition}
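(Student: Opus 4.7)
The plan is to reduce the statement to the Erickson–Whittlesey greedy matroid property by showing that $\cup_{p\in P}G_p$ already contains a shortest basis of $\homo(\KK)$; once this holds, the greedy procedure applied to $\cup_{p\in P}G_p$ must output a shortest basis. The starting point is Proposition~\ref{lem:tight1}: every tight loop lies in $\cup_{p\in P}C_p$, and by Proposition~\ref{lem:tight} every loop of any shortest basis is tight, so some shortest basis of $\homo(\KK)$ is contained in $\cup_{p\in P}C_p$.

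I would then fix a shortest basis $B$ of $\homo(\KK)$ that maximizes $|B\cap(\cup_{p\in P}G_p)|$ over all shortest bases, and prove by contradiction that $B\subseteq \cup_{p\in P}G_p$. Suppose some $g\in B$ lies outside $\cup_{p\in P}G_p$. Since $g$ is tight it belongs to $C_p$ for some $p$, hence $g\in C_p\setminus G_p$. By the definition of $G_p$ as the greedy subset of $C_p$ under the canonical order, the class $[g]$ is a $\mathbb{Z}_2$-linear combination of classes $[h_j]$ with each $h_j\in G_p$ and $h_j<g$; in particular, $\length(h_j)\le \length(g)$ for every such $h_j$.

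Next I would invoke basis exchange in the vector space $\homo(\KK)$. Since $B$ is a basis, $[g]\notin \mathrm{span}\{[b]:b\in B\setminus\{g\}\}$, whereas $[g]$ lies in the span of the $[h_j]$; therefore at least one $h_j$ is itself outside $\mathrm{span}\{[b]:b\in B\setminus\{g\}\}$, so that $B'=(B\setminus\{g\})\cup\{h_j\}$ is again a basis. Because $\length(h_j)\le \length(g)$ and $B$ is already a shortest basis, we must have $\length(h_j)=\length(g)$, so $B'$ is also a shortest basis. But $h_j\in G_p\subseteq \cup_p G_p$ and $h_j\notin B$ (otherwise $[h_j]\in \mathrm{span}\{[b]:b\in B\setminus\{g\}\}$), while $g\notin \cup_p G_p$; this yields $|B'\cap (\cup_p G_p)|>|B\cap (\cup_p G_p)|$, contradicting the maximality of $B$. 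Thus $B\subseteq \cup_p G_p$, and the Erickson–Whittlesey greedy property for the matroid $\homo(\KK)$ completes the proof.

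The subtle point I expect is the final length equality: the canonical order tolerates ties of length resolved by the auxiliary index $\nu(e)$, so $h_j<g$ only gives $\length(h_j)\le \length(g)$, and one needs the minimality of $\length(B)$ to upgrade this to the equality that keeps the exchanged basis $B'$ shortest. Apart from this, the argument is a clean matroid exchange, and no information about the geometry of $\KK$ beyond non-negative edge weights is required.
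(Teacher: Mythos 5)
Your overall reduction coincides with the paper's: both proofs show that $\cup_{p\in P}G_p$ contains a shortest basis of $\homo(\KK)$ and then invoke the Erickson--Whittlesey greedy property quoted at the start of Section~\ref{sploop-sec}. The containment itself, however, is proved by a genuinely different route. The paper sorts all of $\cup_{p\in P}C_p$ by length with a tie-breaking rule chosen to be compatible with each local canonical order, takes the global greedy set $G$ of $\cup_{p\in P}C_p$ (a shortest basis, since $\cup_{p}C_p$ contains one by Propositions~\ref{lem:tight} and~\ref{lem:tight1}), and argues that each $L\in G$, being independent of everything preceding it in the global order, is in particular independent of its predecessors inside $C_q$ for a suitable base point $q$, hence $L\in G_q$. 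You instead run a basis-exchange argument on a shortest basis $B$ maximizing $|B\cap(\cup_{p}G_p)|$: a loop $g\in B\setminus\cup_{p}G_p$ lies in some $C_p\setminus G_p$, so $[g]$ is spanned by classes of loops of $G_p$ preceding it in the canonical order (hence no longer than $g$), and one of these can be exchanged into $B$ without increasing total length or decreasing the overlap, contradicting maximality. Your argument is correct; the exchange step is the standard matroid one, and it has the mild advantage of not requiring a globally consistent tie-breaking order on $\cup_{p}C_p$ (the paper's phrase ``break the tie \ldots for any such one point'' glosses over whether this yields a well-defined total order), at the cost of introducing the extremal choice of $B$. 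You also correctly identify and resolve the one delicate point: $h_j<g$ in the canonical order only gives $\length(h_j)\le\length(g)$, and the minimality of $\length(B)$ is what upgrades this to the equality needed to keep the exchanged basis shortest.
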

\begin{proof}
We show that $\cup_{p\in P}G_p$ contains a shortest basis of $\homo(\KK)$.
Then, the proposition
follows by the argument as delineated at the beginning 
of section~\ref{sploop-sec}.

Consider all canonical loops $\cup_{p\in P}C_p$. Sort them in
non-decreasing order of their lengths. 
If two loops have equal lengths and if there are 
points $p_i\in P$ for which both of them are in $C_{p_i}$, 
break the tie using the canonical order applied to the
canonical loops for any such one point. 
Based on this order let $G$ be the greedy set from $\cup_{p\in P}C_p$. 
Proposition~\ref{lem:tight}
and Proposition~\ref{lem:tight1} imply that 
$\cup_{p\in P}C_p$ contains 
a shortest basis of $\homo(\KK)$ and thus $G$ is a shortest basis. 
Consider any loop $L$
in $G$. It is a canonical loop with respect to 
some $q\in P$ for which all loops appearing
before $L$ in the canonical order precede
it in the sorted sequence. The loop $L$ is independent of 
the loops in $\cup_{p\in P}C_p$ appearing before $L$, 
in particular independent of the loops in $C_q$ 
appearing before $L$ 
in the canonical order, which means $L \in G_q$. 
Therefore $\cup_{p\in P}G_p$ contains a shortest basis $G$ of $\homo(\KK)$. 
The proposition follows. 
\end{proof}

Motivated by the above observations, we formulate an algorithm
\CanonGen that computes the greedy set $G_p$ of $C_p$.
We note that, very recently, Chen and Freedman~\cite{CF09} proposed a similar
algorithm which computes
an {\em approximation} of a shortest basis of a simplicial
complex rather than an optimal one.

\begin{algorithm}[h!]
\floatname{algorithm}{Algorithm}
\caption{  \CanonGen($p$, $\KK$) } 
\begin{algorithmic}[1]
   \STATE  Construct a shortest path tree $T$ in $\KK$
with $p$ as the root. 
Let $E_T$ denote the set of tree edges.
	\STATE  For each non-tree edge $e = (q_1, q_2) \in E \setminus E_T$, 
let $c_p(e)$ be the canonical loop of $e$. 
	\STATE  Perform the persistence algorithm based on the following 
filtration of $\KK$: all the vertices in $P={\rm Vert}(\KK)$, followed by all tree edges in 
$T$, followed by non-tree edges in the 
{\em canonical order}, and followed by all the triangles in $\KK$. 
There are $k = \rank(H_1(\KK))$ number of edges 
unpaired after the algorithm, and each of them is necessarily a non-tree edge. 
Return the set of canonical loops associated with them. 
\end{algorithmic}
\end{algorithm}

\begin{proposition}
\CanonGen$(p,\KK)$ outputs the greedy set $G_p$ chosen from $C_p$.
\label{lem:SLfixedbasept}
\end{proposition}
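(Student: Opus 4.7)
The plan is to show that the non-tree edges left unpaired by the persistence algorithm in step 3 of \CanonGen correspond bijectively to the canonical loops in the greedy set $G_p$, so that the set returned by the algorithm equals $G_p$. The key observation is that on the prescribed filtration, the persistence pairing realizes exactly the greedy independence test on $C_p$ in canonical order via the elder rule.

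First I would analyze the structure of the filtration before the triangles are inserted. Since $T$ is a shortest-path tree rooted at $p$, after inserting all vertices and all tree edges the $1$-skeleton at that stage is acyclic on the component containing $p$, so no $1$-cycle has been born yet. Each subsequent non-tree edge $e_i=(q_1,q_2)$, inserted in canonical order, is therefore a positive $1$-simplex that creates exactly one new $1$-cycle, namely the canonical loop $c_p(e_i)=\SP_T(p,q_1)\concatenate e_i\concatenate \SP_T(q_2,p)$, since this is the unique cycle in $T\cup\{e_i\}$. Hence right after the last non-tree edge is inserted, $\{[c_p(e_1)],\ldots,[c_p(e_m)]\}$ is a basis of $\homo$ of the current $1$-skeleton.

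Next I would invoke the elder rule in the triangle-insertion phase. After the standard column reduction, a triangle $t$ with nontrivial boundary pairs with the largest-index non-tree edge occurring in the reduced boundary expression, i.e., it kills the youngest surviving class. Consequently, a non-tree edge $e_i$ is left unpaired at the end if and only if $[c_p(e_i)]$, viewed in $\homo(\KK)$, does not lie in the span of $\{[c_p(e_j)] : j<i\}$. A short induction on $i$ shows that this span equals the span of the canonical loops already admitted into the greedy set: any $c_p(e_j)$ rejected by the greedy rule was rejected precisely because its class was already a combination of the earlier admitted ones. Therefore $e_i$ is unpaired iff $c_p(e_i)$ would be admitted by the greedy procedure to $G_p$, which identifies the output of \CanonGen with $G_p$.

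The main subtlety lies in justifying that the simplex-wise persistence pairing produced by column reduction on the prescribed filtration genuinely realizes the elder-rule characterization above, and that ties among canonical loops of equal length are handled consistently between the two procedures. The former is classical once the filtration is fixed as a totally ordered simplex-wise filtration and the reduction is carried out left to right. The latter is guaranteed by the canonical order itself, which refines the length order by the injective index $\nu$ so that the filtration is totally ordered; both the greedy construction of $G_p$ and the persistence reduction then depend on the same total order and therefore select the same non-tree edges.
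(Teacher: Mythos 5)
Your proposal is correct and follows essentially the same route as the paper: both arguments rest on the fact that the persistence algorithm pairs a triangle with the youngest unpaired positive (non-tree) edge in its reduced boundary, so that a non-tree edge survives unpaired exactly when its canonical loop is independent in $\homo(\KK)$ of the canonical loops of all earlier edges in the canonical order. The paper phrases this as a contradiction argument for the greedy edges plus a count of the $k$ unpaired edges, while you state the if-and-only-if characterization directly and close with a short induction on spans; the difference is purely presentational.
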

\begin{proof}
Let $\{e_1, e_2 \cdots, e_m\}$ be the non-tree edges in the shortest path
tree $T$ listed in the canonical order. Let $G_p=\{c_p(e_1^*), c_p(e_2^*), 
\cdots, c_p(e_k^*)\}$. 
It suffices to show that $\{e_1^*, e_2^* \cdots, e_k^*\}$ is the set 
of unpaired 
edges. Observe that for any $e_i^*$, $c_p(e_i^*)$ 
is independent of any subset of $\{c_p(e_j): e_j < e_i^*\}$. 

We prove the proposition by contradiction. Assume some $e_i^*$ gets paired
by a triangle $t$ in the persistence algorithm. Let $\KK_t$ denote the
complex in the filtration right before $t$ is added. 
Let $f: \KK_t\hookrightarrow \KK$ 
be the inclusion map; it induces a homomorphism $f_* = \homo(\KK_t) \rightarrow \homo(\KK)$. 
Let $[L]_t$ denote the homology class in $\KK_t$ carried by the loop $L$. 
The boundary
$\partial t$ uniquely determines a subset of unpaired 
positive edges $e_1'< \cdots < e_n'$ 
in $\KK_t$ such that $[\partial t]_t = [c_p(e_1')]_t +\cdots+ [c_p(e_n')]_t$. 
The persistence
algorithm~\cite{ELZ02} picks the youngest one from this subset 
to pair with $t$, i.e., $e_i^* = e_n'$. 
On the other hand, we have 
$$[c_p(e_1')] + \cdots + [c_p(e_{n-1}')] + [c_p(e_i^*)] = f_*( [c_p(e_1')]_t + \cdots + [c_p(e_{n-1}')]_t + [c_p(e_i^*)]_t) = f_*([\partial t]_t)=0 $$
which means that $c_p(e_i^*)$ is dependent on a 
subset of $\{c_p(e_j): e_j < e_i^*\}$. We reach
a contradiction. 
\end{proof}

All previous results put together provide a 
greedy algorithm for computing a shortest basis 
of $\homo(\KK)$.

\begin{algorithm}[h!]
\floatname{algorithm}{Algorithm}
\caption{  \SPGen($\KK$) } 
\begin{algorithmic}[1]
   \STATE For each $p\in P={\rm Vert}(\KK)$ compute $G_p:=$\CanonGen($p$,$\KK$).  Let $k=|G_p|$. 
	\STATE Sort all loops in $\cup_p G_p$ by their lengths in the increasing order.  Let $\g_1,\ldots,\g_{k|P|}$ be this sorted list. 
	\STATE  Initialize $G:=\{g_1\}$.
	\FOR{$i:=2$ to $k|P|$,}
		\IF{$|G| = k$,}
			\STATE Exit the for loop.
		\ELSIF{$\g_i$ is independent of all loops in $G$,}
		 	\STATE Add $g_i$ to $G$.
		\ENDIF
	\ENDFOR	
	\STATE Return $G$.
\end{algorithmic}
\end{algorithm}

\subsubsection{Checking independence}
In step 7 of {\sc SPGen} we need to determine if a generator $g$ is
independent of all generators $g_1',\ldots,g_s'$ so far selected in $G$. 
We obtain $g$ from running persistence
algorithm on a shortest path tree based filtration for a point $p$ in step 3
of {\sc CanonGen}. At the end of this persistence algorithm
we must have gotten an unpaired edge, say $e$, where $c_p(e)=g$. 
To determine if $g$ is independent of all generators 
selected so far we adopt a sealing technique proposed in~\cite{CF09}.
We fill $g_1'\ldots g_s'$ with triangles.
The filling is done only 
combinatorially by choosing
a dummy vertex, say $v$, and
adding triangles $vv_iv_{i+1}$ for each edge $v_iv_{i+1}$ of the
loops to be filled. 
Let $\KK'$ be the new complex after adding these triangles and their edges 
to $\KK$. In effect, these triangles and edges destroy the
generators $g_1',\ldots,g_s'$ from $\KK$. They destroy the
generator $g$ as well if and only if $g$ is dependent on
$g_1',\ldots, g_s'$. Since we are sealing according to
the greedy order, the proof of Lemma
4.4 in~\cite{CF09} applies to establish this fact.
Whether $g$ is rendered trivial or not
can be determined as follows. 
We continue the persistence algorithm
corresponding to the vertex $p$ with the addition of
the simplices in $\KK'\setminus \KK$ and check if $e$ is now paired or not. 

Let $n_v$, $n_e$, and $n_t$ denote the number of
vertices, edges, and triangles respectively in $\KK$.
Notice that we add at most $n_e$ edges and triangles for sealing
since the dummy vertex is added to at most $n_e$ edges
to create new triangles in $\KK'$. 

\subsection{Time complexity}
\label{time}

First, we analyze the time complexity of {\sc CanonGen}.
Shortest path tree computation
in step 1 of {\sc CanonGen} takes $O(n_v\log n_v+n_e)$ time.
The persistence algorithm for {\sc CanonGen} can be
implemented using matrix reductions~\cite{CEM06} in 
time $O((n_v+n_e)^2(n_e+n_t))$. This is because
there are $n_v+n_e$ rows in this matrix and each
insertion of $n_e+n_t$ simplices can be implemented
in $O(n_v+n_e)$ column operations each taking 
$O(n_v+n_e)$ time. Therefore, {\sc CanonGen}
takes $O(n_v\log n_v + (n_v+n_e)^2(n_e+n_t))$ time.

Step 1 of {\sc SPGen} calls
{\sc CanonGen} $n_v$ times. 
Therefore, step 1 of {\sc SPGen} takes $O(n_v^2\log n_v + n_v(n_v+n_e)^2
(n_e+n_t))$ time.
Step 2 of {\sc SPGen} can be performed in $O(n_vk\log n_vk)$ time
where $k=O(n_e)$ is the rank of $\homo(\KK)$.
The time complexity for independence check in step 7
is dominated by the persistence algorithm which is continued
on $\KK$ to accommodate simplices in $\KK'$.
Since we add $O(n_e)$ new simplices in $\KK'$, it has the 
same asymptotic complexity as for running the
persistence algorithm on $\KK$. We conclude that 
{\sc SPGen} spends 
$O(n_v(n_v+n_e)^2(n_e+n_t))$ time in total.
If we take $n=|\KK|$, this gives an $O(n^4)$ time complexity.

Now, we analyze the time complexity of {\sc ShortLoop} which is
the main algorithm. Let $n_e$ and $n_t$ be the
number of edges and triangles in $\Rips^{2r}(P)$
created out of $n$ points. Step 1 takes at most $O(n+n_e+n_t)$ time since
we only compute edges and triangles of
$\Rips^{2r}(P)$ out of $n$ points.
Accounting for the persistence algorithm in step 2
and the time complexity of step 3 
we get that {\sc ShortLoop}
takes 
\[
O(n(n+n_e)^2(n_e+n_t))
\mbox{ time}. 
\]

The procedure {\sc SPGen}($\KK$) 
computes canonical sets $G_p$ which
is ensured by Proposition~\ref{lem:SLfixedbasept}. Then, it forms a greedy set
from these canonical sets which is a shortest basis for
$\homo(\KK)$ by Proposition~\ref{greedy}. This and the 
time analysis for {\sc SPGen} establish Theorem~\ref{thm:main2}.

\section{Approximation for $M$}
The algorithm {\sc SpGen} is used in {\sc ShortLoop} to produce
a shortest basis for the persistent homology group $\homo^{r,2r}(\Rips(P))$.
Proposition~\ref{cech-rips-gen} in this section
shows that a shortest basis
of $\homo^{r,2r}(\Rips(P))$ coincides with a shortest basis
in $\homo(\Cech^r(P))$. Therefore, if we show that a shortest basis
in $\homo(\Cech^r(P))$ approximates  
a shortest basis in $\homo(M)$, we have the approximation result 
of Theoerm~\ref{thm:main}.
 
\subsection{Connecting $M$, \v{C}ech complex, and Rips complex}
First, we note the following result established in~\cite{NSW06}
which connects $M$ with the union of the balls $P^r = \cup_{p\in P}B(p, r)$.
\begin{proposition}
Let $P\subset M$ be an $\eps$-sample. 
If $2\eps\leq r \leq \sqrt{\frac{3}{5}}\rho(M)$, 
there is a deformation retraction from $P^r$ to $M$ 
so that the corresponding retraction
$t: P^r\rightarrow M$ has $t(B)\subset B$ 
for any ball $B \in \{B(p,r)\}_{p\in P}$. 
\label{cover-mani}
\end{proposition}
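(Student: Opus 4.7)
The plan is to take the retraction to be the nearest-point projection $\pi : P^r \to M$ and verify both that it defines a deformation retraction and that it satisfies the additional ball-containment property. Since $r \le \sqrt{3/5}\,\rho(M) < \rho(M)$, the tubular neighborhood $\Tub(M,\rho(M))$ contains $P^r$, so $\pi$ is well-defined, continuous, and fixes $M$ pointwise; set $t := \pi$. The candidate deformation is the straight-line homotopy $H(x,s) := (1-s)\,x + s\,\pi(x)$.

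Step 1 is to check that $H$ stays inside $P^r$. For any $x \in P^r$, the segment from $x$ to $\pi(x)$ has length $\|x-\pi(x)\| \le r$ and its endpoint $\pi(x)$ lies on $M$. Since $P$ is an $\eps$-sample with $2\eps \le r$, there is a sample point $p' \in P$ with $\|\pi(x)-p'\| \le \eps \le r/2$, and every point on the segment is at Euclidean distance less than $r$ from $p'$; hence the segment lies in $B(p',r) \subset P^r$. Thus $H$ is a valid homotopy from the identity to a map landing in $M$, and $t|_M = \mathrm{id}_M$ gives the deformation retraction.

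Step 2, the core of the claim, is the ball-containment property $t(B(p,r)) \subset B(p,r)$. Fix $x \in B(p,r)$ and write $a = \|x-p\|$, $b = \|\pi(x)-p\|$, $c = \|x-\pi(x)\|$. Since $p \in M$ one has $c \le a < r$. The reach bound applied at $\pi(x)$ (using that $x-\pi(x)$ is normal to $M$ at $\pi(x)$ and $p \in M$) yields the angular estimate
\[
\bigl\langle\, x-\pi(x),\; \pi(x)-p \,\bigr\rangle \;\ge\; -\,\frac{c\,b^2}{2\rho(M)}.
\]
Expanding $a^2 = \|(x-\pi(x)) + (\pi(x)-p)\|^2$ and substituting gives $a^2 \ge c^2 + b^2 - c b^2/\rho(M)$, hence
\[
b^2 \;\le\; \frac{a^2 - c^2}{\,1 - c/\rho(M)\,}.
\]
The goal is to deduce $b < r$ from $a < r$ and $c \le a$, exploiting the precise hypothesis $r \le \sqrt{3/5}\,\rho(M)$.

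The main obstacle is precisely this last implication: for $a$ approaching $r$ and $c$ close to $0$, the bound $b^2 \le a^2/(1-c/\rho(M))$ is not obviously less than $r^2$, so a naive monotonicity argument fails. To extract $b < r$, I would maximize the right-hand side of the displayed inequality in $c \in [0,a]$ (the critical point lies at $c = \rho(M) - \sqrt{\rho(M)^2 - a^2}$) and then use the assumption $r^2 \le \tfrac{3}{5}\rho(M)^2$ in the resulting closed-form bound, together with the additional constraint that $c$ is bounded below by $d(x,M)$ (which is controlled by $\eps$ since $\pi(x) \in M$ and $P$ is an $\eps$-sample). The constant $\sqrt{3/5}$ is known from \cite{NSW06} to be the sharp threshold for the normal-projection retraction to behave well, and the expectation is that it is exactly what is required to close this estimate.
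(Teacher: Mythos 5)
The paper offers no proof of this proposition --- it is quoted as ``established in~\cite{NSW06}'' --- so your attempt has to stand on its own, and unfortunately it cannot be completed as designed: the nearest-point projection $\pi$ does \emph{not} satisfy $\pi(B(p,r))\subset B(p,r)$, so the very statement your Step~2 tries to establish for your chosen $t$ is false. Your inequality $b^2\le (a^2-c^2)/(1-c/\rho(M))$ is correct, but it is attained with \emph{equality} when $M$ is locally a round circle of radius $\rho(M)$, and there the conclusion genuinely fails. Concretely, let $M$ be the unit circle in $\mathbb{R}^2$ (so $\rho(M)=1$), $p=(1,0)\in P$, $r=\sqrt{3/5}$, and $x=0.7(\cos\theta,\sin\theta)$ with $\cos\theta=0.7-\delta$ for small $\delta>0$. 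Then $\|x-p\|^2=0.51+1.4\delta<0.6=r^2$, so $x\in B(p,r)$, while $\pi(x)=(\cos\theta,\sin\theta)$ has $\|\pi(x)-p\|^2=2(1-\cos\theta)=0.6+2\delta>r^2$. In general the maximum of $(a^2-c^2)/(1-c/\rho)$ over $c\in[0,a]$ (at your critical point $c=\rho-\sqrt{\rho^2-a^2}$, which the circle realizes) equals $2\rho\bigl(\rho-\sqrt{\rho^2-a^2}\bigr)$, and requiring this to be at most $r^2$ as $a\to r$ reduces to $r^4\le 0$; no threshold on $r/\rho(M)$, and in particular not $\sqrt{3/5}$, can close the estimate. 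The hoped-for extra constraint is also vacuous: $c$ \emph{equals} $d(x,M)$ by definition and sweeps out all of $[0,r)$ as $x$ ranges over $B(p,r)$; the sampling density $\eps$ gives no lower bound on it. (Step~1 has a secondary error of the same flavor: your triangle inequality only yields $\|y-p'\|<d(x,M)+\eps<r+\eps$, not $<r$; the correct way to keep the segment inside a single ball is convexity of $s\mapsto\|(1-s)x+s\pi(x)-p\|^2$, but that again needs $\|\pi(x)-p\|<r$, i.e.\ exactly the containment that fails.)

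For context on where $\sqrt{3/5}$ actually enters: the argument in~\cite{NSW06} shows each fiber $\pi^{-1}(v)\cap P^r$ is star-shaped about $v$ by covering the segment $[x,\pi(x)]$ not with the single ball $B(p,r)$ containing $x$ but with $B(p,r)\cup B(p',r)$, where $p'$ is a sample point within $\eps$ of $\pi(x)$; that yields the deformation retraction but \emph{not} the per-ball containment $t(B)\subset B$, which is the additional property this paper relies on in Proposition~\ref{cech-mani}. So the obstacle you flagged is not an artifact of your write-up: a correct proof must either construct a different retraction or settle for the weaker containment $t(B(p,r))\subset B\bigl(p,r(1+O(r/\rho(M)))\bigr)$, which is what your inequality actually delivers and which the downstream length bounds could be adjusted to absorb.
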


Recall that $\Cech^{2r}(P)$ is the nerve of the cover
$\{B(p,r)\}_{p\in P}$ of the space $P^{r}$.  By a result of
Leray~\cite{Leray}, it is known that $P^{r}$ and $\Cech^{2r}(P)$ are
homotopy equivalent. The next proposition follows from examining the specific
equivalence maps used to prove the Nerve Lemma in Hatcher~\cite{Hatcher}.
In particular, the simplices of the \v{C}ech complex are mapped to a subset of 
the union of the balls centered at their vertices, see Appendix for its proof. 
\begin{proposition}
There exists a homotopy equivalence $f\colon \Cech^{2r}(P) 
\rightarrow P^{r}$ such that
for each simplex $\sigma\in \Cech^{2r}(P)$, 
one has $f(\sigma)\subset \cup_{p\in \Vert(\sigma)}B(p,r)$ 
and $f(p) = p$ for any $p\in P$.
\label{cech-cover} 
\end{proposition}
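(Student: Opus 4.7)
The plan is to invoke the Nerve Lemma and verify that a suitable explicit realization of it has the required geometric locality. Since each $B(p,r)$ is a convex open Euclidean ball, every non-empty finite intersection $\bigcap_{p\in S}B(p,r)$ is convex, hence contractible. Thus $\{B(p,r)\}_{p\in P}$ is a good cover of $P^r$ whose nerve is precisely $\Cech^{2r}(P)$, and the Nerve Lemma already guarantees that the two spaces are homotopy equivalent; what remains is to exhibit a specific equivalence that is the identity on vertices and carries each simplex into the union of its vertex balls.

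I would build $f$ skeleton by skeleton. Set $f(p)=p$ on the $0$-skeleton, which trivially satisfies $f(p)\in B(p,r)$. Assume inductively that $f$ has been defined on the $(k-1)$-skeleton with $f(\tau)\subseteq\bigcup_{p\in\Vert(\tau)}B(p,r)$ for every face $\tau$ of dimension less than $k$. For each $k$-simplex $\sigma$, pick a point $z_\sigma\in\bigcap_{p\in\Vert(\sigma)}B(p,r)$ (non-empty by the definition of $\Cech^{2r}(P)$) and extend $f$ over $\sigma$ by linear coning in $\mathbb{R}^d$: for $x=(1-t)b+t\hat\sigma$ with $b\in\partial\sigma$ and barycenter $\hat\sigma$, set $f(x):=(1-t)f(b)+t\,z_\sigma$. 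By the inductive hypothesis, some $p\in\Vert(\sigma)$ satisfies $f(b)\in B(p,r)$; since $z_\sigma\in B(p,r)$ and $B(p,r)$ is convex, the entire segment stays inside $B(p,r)\subseteq \bigcup_{q\in\Vert(\sigma)}B(q,r)$, preserving the containment.

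To finish, I would verify that this explicit $f$ is a homotopy equivalence by constructing a partition-of-unity inverse $g\colon P^r\to\Cech^{2r}(P)$. Take a partition of unity $\{\phi_p\}$ subordinate to $\{B(p,r)\}$ and set $g(x):=\sum_p\phi_p(x)\,p$ in barycentric coordinates of the carrier simplex; this is well defined because $\{p:\phi_p(x)>0\}\subseteq\{p:x\in B(p,r)\}$ and the latter vertices span a simplex of $\Cech^{2r}(P)$, witnessed by $x$ itself. Convexity of each $B(p,r)$ then yields a straight-line homotopy $f\circ g\simeq \mathrm{id}_{P^r}$. The main obstacle is showing $g\circ f\simeq \mathrm{id}_{\Cech^{2r}(P)}$: because $f$ lands in unions rather than intersections of balls, a direct acyclic-carrier argument does not immediately apply. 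I would handle this by introducing the standard double mapping cylinder $\Delta\mathcal{U}\subseteq P^r\times\Cech^{2r}(P)$, whose two projections are homotopy equivalences by fiberwise convex contractibility, and checking that our inductively constructed $f$ is homotopic, through carrier-compatible homotopies inside each convex set $\bigcap_{p\in\Vert(\sigma)}B(p,r)$, to $\pi_X\circ s$ for a suitable section $s$ of $\Delta\mathcal{U}\to\Cech^{2r}(P)$; the latter composition is a homotopy equivalence by the standard nerve-theorem argument, so $f$ is too.
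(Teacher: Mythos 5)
Your route is genuinely different from the paper's. The paper takes the specific homotopy equivalence produced by Hatcher's proof of the Nerve Lemma, $f=\pi\circ\Delta q\circ h$ through the barycentric subdivision and the iterated mapping cylinders $\Delta P^r$, and then \emph{verifies} the containment $f(\sigma)\subset\cup_{p\in\Vert(\sigma)}B(p,r)$ by tracing which fibers the closed star of each vertex lands in. You instead \emph{build} a map for which the containment is immediate --- coning each $k$-simplex to a point $z_\sigma\in\cap_{p\in\Vert(\sigma)}B(p,r)$ and using convexity of the balls --- and then must separately prove that this map is a homotopy equivalence. Your skeletal construction and the containment argument are correct (the cone extension is well defined and continuous, and the segment from $f(b)$ to $z_\sigma$ stays in a single ball $B(p,r)$ with $p\in\Vert(\sigma)$). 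What your approach buys is a completely transparent proof of the geometric locality, which is the part the paper has to work hardest to extract from Hatcher; what it costs is that the homotopy-equivalence claim no longer comes for free from the Nerve Lemma, since your $f$ is not literally Hatcher's map.

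That last step is where your write-up has a soft spot. The proposed repair --- a homotopy from your $f$ to $\pi_X\circ s$ ``through carrier-compatible homotopies inside each convex set $\cap_{p\in\Vert(\sigma)}B(p,r)$'' --- cannot work as stated, because neither your $f$ nor the nerve-theorem map sends $\sigma$ into that intersection (both only land in the union of the vertex balls), so the intersections do not carry either map, let alone a homotopy between them. Fortunately you do not need $g\circ f\simeq\mathrm{id}$ at all. The partition-of-unity map $g\colon P^r\to\Cech^{2r}(P)$ is itself a homotopy equivalence by the Nerve Lemma (this is exactly what the mapping-cylinder machinery proves, and you may cite it as a black box). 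You have already shown $f\circ g\simeq\mathrm{id}_{P^r}$ by the straight-line homotopy inside a single ball. If $g'$ is any homotopy inverse of $g$, then $f\simeq f\circ g\circ g'\simeq g'$, so $f$ is homotopic to a homotopy equivalence and is therefore one. With that substitution your argument is complete and correct.
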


The two propositions above together provide the connection between $M$ and the \v{C}ech complex:

\begin{proposition}
Let $P\subset M$ be an $\eps$-sample. If $2\eps\leq r \leq \sqrt{\frac{3}{5}}\rho(M)$, there is a homotopy equivalence 
map  $h=t\circ f : \Cech^{2r}(P) \rightarrow M$ 
such that $h(\sigma)\subset M\cap (\cup_{p\in \Vert(\sigma)} B(p,r))$ 
and $h(p) = p$ for any $p\in P$.
\label{cech-mani}
\end{proposition}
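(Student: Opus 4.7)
The plan is to simply compose the two maps provided by Propositions~\ref{cover-mani} and~\ref{cech-cover} and verify that the combined map inherits the claimed containment properties. Define $h \colon \Cech^{2r}(P) \to M$ by $h = t \circ f$, where $f \colon \Cech^{2r}(P) \to P^{r}$ is the homotopy equivalence given by Proposition~\ref{cech-cover} and $t \colon P^{r} \to M$ is the retraction map underlying the deformation retraction of Proposition~\ref{cover-mani}.

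First I would check that $h$ is a homotopy equivalence. Since $f$ is a homotopy equivalence by Proposition~\ref{cech-cover} and $t$ is a deformation retraction by Proposition~\ref{cover-mani} (in particular a homotopy equivalence between $P^{r}$ and $M$), the composition $h = t \circ f$ is a homotopy equivalence as well. The hypothesis $2\eps \leq r \leq \sqrt{3/5}\,\rho(M)$ is exactly what is needed to invoke Proposition~\ref{cover-mani}, while Proposition~\ref{cech-cover} has no additional sampling assumption, so both cited results apply.

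Next I would verify the simplex-wise containment. For any simplex $\sigma \in \Cech^{2r}(P)$, Proposition~\ref{cech-cover} gives
$$f(\sigma) \subset \bigcup_{p \in \Vert(\sigma)} B(p, r).$$
Applying $t$ and using the pointwise property $t(B(p,r)) \subset B(p,r)$ from Proposition~\ref{cover-mani},
$$h(\sigma) = t(f(\sigma)) \subset \bigcup_{p \in \Vert(\sigma)} t(B(p,r)) \subset \bigcup_{p \in \Vert(\sigma)} B(p,r).$$
Since $t$ maps $P^{r}$ into $M$, we also have $h(\sigma) \subset M$, and together these give
$$h(\sigma) \subset M \cap \Bigl(\bigcup_{p \in \Vert(\sigma)} B(p,r)\Bigr),$$
as required.

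Finally, for any $p \in P$, Proposition~\ref{cech-cover} guarantees $f(p) = p$. Because $P \subset M$ and $t$ is a retraction onto $M$, we have $t(p) = p$, hence $h(p) = t(f(p)) = p$. The argument is essentially mechanical once the two component maps are in hand; the only place where one must be a little careful is ensuring that the specific \emph{geometric} fidelity properties (not just homotopy equivalence) of $f$ and $t$ both survive the composition, which they do because the two containment conditions chain directly, and this is really the only step with any content.
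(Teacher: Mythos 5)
Your proposal is correct and follows exactly the route the paper intends: the paper offers no separate proof of this proposition, simply stating that Propositions~\ref{cover-mani} and~\ref{cech-cover} "together provide the connection," i.e., the composition $h=t\circ f$ with the two containment properties chained as you describe. Your write-up just makes that implicit argument explicit.
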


Now we establish a  connection between \v{C}ech complex and Rips complexes
which helps proving Proposition~\ref{cech-rips-gen}. 
\begin{proposition} 
Let $P\subset M$ be an $\eps$-sample. Then, for
$4\eps\leq r \leq \frac{1}{2}\sqrt{\frac{3}{5}}\rho(M)$,
\begin{equation*}
\homo^{r,2r}(\Rips(P)) \hsim \homo(\Cech^{r}(P)) \stackrel{j_{1*}}{\hsim} \homo(\Cech^{2r}(P)) \stackrel{j_{2*}}{\hsim}  \homo(\Cech^{4r}(P)), 
\end{equation*}
where $j_{1*}$ and $j_{2*}$ are induced by the inclusion maps $j_1$ and $j_2$ respectively. 
Moreover, if 
\begin{equation*}
\Cech^{r}(P) \stackrel{i_{1}}{\hookrightarrow} \Rips^{r}(P)) \stackrel{i_{2}}{\hookrightarrow} \Cech^{2r}(P)) \stackrel{i_{3}}{\hookrightarrow} \Rips^{2r}(P)) \stackrel{i_{4}}{\hookrightarrow} \Cech^{4r}(P), 
\end{equation*}
then $j_1 = i_2\circ i_1$, and $j_2 =i_4 \circ i_3$ and $\homo^{r,2r}(\Rips(P))=image~(\iota_*)$ where
$\iota_*:\homo(\Rips^{r}(P))\rightarrow \homo(\Rips^{2r}(P))$ is induced by the inclusion $\iota=i_3\circ i_2$. 
\label{cech-rips}
\end{proposition}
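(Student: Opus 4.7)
The plan is to reduce everything to showing that the three \v{C}ech complexes $\Cech^{r}(P)$, $\Cech^{2r}(P)$, and $\Cech^{4r}(P)$ are all homotopy equivalent to $M$ via maps that are compatible with the natural inclusions, and then to use a sandwich argument that leverages the nesting $\Cech^{r}(P) \subseteq \Rips^{r}(P) \subseteq \Cech^{2r}(P) \subseteq \Rips^{2r}(P) \subseteq \Cech^{4r}(P)$. The hypothesis $4\eps \leq r \leq \tfrac{1}{2}\sqrt{3/5}\,\rho(M)$ is precisely what lets me apply Proposition~\ref{cech-mani} three times, with radii $r/2$, $r$, and $2r$ respectively, obtaining homotopy equivalences $h_{r/2}\colon \Cech^{r}(P)\to M$, $h_{r}\colon \Cech^{2r}(P)\to M$, and $h_{2r}\colon \Cech^{4r}(P)\to M$. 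Each of these maps carries a simplex $\sigma$ into $M\cap(\bigcup_{p\in\Vert(\sigma)} B(p,r'))$ for the corresponding $r'$ and fixes vertices of $P$.

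The key step, and where I expect the main technical work to sit, is showing that the inclusion $j_1\colon \Cech^{r}(P)\hookrightarrow\Cech^{2r}(P)$ satisfies $h_{r}\circ j_1 \simeq h_{r/2}$, and similarly $h_{2r}\circ j_2 \simeq h_{r}$. I would use a carrier argument: for each simplex $\sigma\in\Cech^{r}(P)$, both $h_{r/2}(\sigma)$ and $h_{r}(j_1(\sigma))$ lie inside $M\cap(\bigcup_{p\in\Vert(\sigma)} B(p,r))$, which is a small contractible subset of $M$ (using the convexity-radius and reach bounds imposed on $r$). A common acyclic carrier then yields a straight-line/geodesic homotopy between the two maps, constructed skeleton-by-skeleton exactly as in the standard proof of the Nerve Lemma underlying Proposition~\ref{cech-cover}. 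Once this homotopy is established, the 2-out-of-3 property applied to the triangle $\Cech^{r}(P)\xrightarrow{j_1}\Cech^{2r}(P)\xrightarrow{h_r}M$ forces $j_1$ (and similarly $j_2$) to be a homotopy equivalence, so $j_{1*}$ and $j_{2*}$ are isomorphisms on $\homo$.

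With the two isomorphisms in hand, the remaining claims fall out formally. The factorizations $j_1 = i_2\circ i_1$ and $j_2 = i_4\circ i_3$ are immediate from the nesting proposition at the start of the excerpt, since the composition of simplicial inclusions is the inclusion. Applying $\homo$ gives the chain
\[
\homo(\Cech^{r}(P)) \xrightarrow{i_{1*}} \homo(\Rips^{r}(P)) \xrightarrow{i_{2*}} \homo(\Cech^{2r}(P)) \xrightarrow{i_{3*}} \homo(\Rips^{2r}(P)) \xrightarrow{i_{4*}} \homo(\Cech^{4r}(P)),
\]
whose length-two subcompositions are the isomorphisms $j_{1*}$ and $j_{2*}$. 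This forces $i_{2*}$ to be surjective and $i_{3*}$ to be injective. By definition, $\homo^{r,2r}(\Rips(P)) = \mathrm{image}(\iota_*) = \mathrm{image}(i_{3*}\circ i_{2*}) = i_{3*}(\homo(\Cech^{2r}(P)))$, and since $i_{3*}$ is injective, this image is isomorphic to $\homo(\Cech^{2r}(P)) \cong \homo(\Cech^{r}(P))$ via $j_{1*}^{-1}$, giving the desired chain of equivalences. The final statement that $\homo^{r,2r}(\Rips(P)) = \mathrm{image}(\iota_*)$ with $\iota = i_3\circ i_2$ is then just a restatement of the definition of the persistent homology group together with the observed factorization of the inclusion $\Rips^{r}(P)\hookrightarrow\Rips^{2r}(P)$.
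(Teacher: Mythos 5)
Your overall strategy --- apply Proposition~\ref{cech-mani} at scales $r/2$, $r$, and $2r$ to obtain homotopy equivalences of the three \v{C}ech complexes with $M$, show these commute up to homotopy with the inclusions $j_1,j_2$ so that $j_{1*},j_{2*}$ are isomorphisms, and then run the sandwich argument ($i_{2*}$ surjective from $j_{1*}=i_{2*}\circ i_{1*}$, $i_{3*}$ injective from $j_{2*}=i_{4*}\circ i_{3*}$, hence $\mathrm{image}(\iota_*)=i_{3*}(\homo(\Cech^{2r}(P)))\cong\homo(\Cech^{2r}(P))$) --- is exactly the intertwining argument of~\cite{CO08} that the paper's one-line proof invokes, and the algebraic half of your write-up, including the parameter bookkeeping for the three applications of Proposition~\ref{cech-mani}, is complete and correct.

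The gap sits in the step you yourself flag as the key one. Your acyclic-carrier argument assigns to each $\sigma\in\Cech^{r}(P)$ the carrier $M\cap(\bigcup_{p\in\Vert(\sigma)}B(p,r))$ and asserts it is contractible ``using the convexity-radius and reach bounds imposed on $r$.'' Two problems. First, this proposition imposes no convexity-radius bound: its hypothesis is only $4\eps\le r\le\frac{1}{2}\sqrt{3/5}\,\rho(M)$ (the bound $r\le\rho_c(M)$ enters only in the later propositions and in Theorem~\ref{thm:main}), so that hypothesis is not available here. Second, even granting such a bound, acyclicity of the intersection of $M$ with a union of Euclidean balls is a genuine geometric lemma, not a parenthetical observation --- and it is precisely the claim that the Chazal--Oudot route is structured to avoid. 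The standard fix works one level up: the inclusions of unions of balls $P^{r/2}\hookrightarrow P^{r}\hookrightarrow P^{2r}$ commute on the nose with the retractions $t$ of Proposition~\ref{cover-mani} (the retraction for the larger union restricts to the one for the smaller), and the nerve equivalences $f$ of Proposition~\ref{cech-cover} commute with the simplicial inclusions up to homotopy by the persistent nerve lemma of~\cite{CO08}; composing gives $h_{r}\circ j_1\simeq h_{r/2}$ with no acyclicity claim on subsets of $M$. If you insist on your carrier, you must actually prove acyclicity of $M\cap(\bigcup_{p\in\Vert(\sigma)}B(p,r))$ under the stated reach bound; note also that the acyclic carrier theorem yields only a chain homotopy, which gives $j_{1*}=h_{r*}^{-1}\circ h_{r/2,*}$ an isomorphism on homology --- enough for the proposition --- but does not by itself license your two-out-of-three conclusion that $j_1$ is a homotopy equivalence of spaces.
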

\begin{proof}
Based on Proposition~\ref{cech-mani}, it can be proved by following the 
idea in~\cite{CO08} of intertwined \v{C}ech and Rips complexes. 
\end{proof}

By definition the set of edges in
$\Cech^{r}(P)$ is same as the set of edges in $\Rips^{r}(P)$.  This
means a set of loops in $\Rips^{r}(P)$ also forms a set of loops in
$\Cech^{r}(P)$. In light of Proposition~\ref{cech-rips}, this implies:

\begin{proposition}
Let $P\subset M$ be an $\eps$-sample and $4\eps\leq r \leq \frac{1}{2}\sqrt{\frac{3}{5}}\rho(M)$. 
Then $\homo^{r,2r}(\Rips(P)) \hsim \homo(M)$ and a basis 
for $\homo^{r,2r}(\Rips(P))$ is 
shortest if and only if it is shortest for $\homo(\Cech^{r}(P))$. 
\label{cech-rips-gen}
\end{proposition}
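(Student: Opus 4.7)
The plan is to handle the two assertions in turn. For $\homo^{r,2r}(\Rips(P)) \hsim \homo(M)$, I would check that under the hypothesis $4\eps \leq r \leq \frac{1}{2}\sqrt{\frac{3}{5}}\rho(M)$ both Proposition~\ref{cech-mani} and Proposition~\ref{cech-rips} apply: the former because $2\eps \leq r \leq \sqrt{\frac{3}{5}}\rho(M)$ holds a fortiori, giving $\homo(\Cech^{2r}(P)) \hsim \homo(M)$, and the latter because the stronger bound on $r$ is assumed, giving $\homo^{r,2r}(\Rips(P)) \hsim \homo(\Cech^{r}(P)) \hsim \homo(\Cech^{2r}(P))$. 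Composing these isomorphisms settles the first claim.

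For the statement about shortest bases, the key observation is that $\Cech^{r}(P)$ and $\Rips^{r}(P)$ share the same $1$-skeleton: an edge $(p,q)$ lies in either complex precisely when $B(p,r/2) \cap B(q,r/2) \neq \emptyset$. Hence every $1$-cycle $c$ in $\Cech^{r}(P)$ is literally the same chain as a $1$-cycle in $\Rips^{r}(P)$, with identical length $\length(c)$. I would then realize the abstract isomorphism $\psi\colon \homo(\Cech^{r}(P)) \to \homo^{r,2r}(\Rips(P))$ supplied by Proposition~\ref{cech-rips} concretely as $\psi([c]_{\Cech^{r}}) = [c]_{\Rips^{2r}}$. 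This is induced by the inclusion chain $\Cech^{r}(P) \stackrel{i_{1}}{\hookrightarrow} \Rips^{r}(P) \stackrel{\iota}{\hookrightarrow} \Rips^{2r}(P)$, so it equals $\iota_{*}\circ i_{1*}$ and is well-defined. Surjectivity is immediate because every class in $\homo^{r,2r}(\Rips(P))$ is represented by some $1$-cycle in $\Rips^{r}(P)$, which lies on the common $1$-skeleton; injectivity follows since composing $\psi$ further with the inclusion-induced map into $\homo(\Cech^{4r}(P))$ reproduces $j_{2*}\circ j_{1*}$, an isomorphism by Proposition~\ref{cech-rips}.

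With $\psi$ thus identified as the identity at the level of representative $1$-cycles on the common $1$-skeleton, a length-preserving correspondence between bases is immediate: a set of loops $\{g_{1},\ldots,g_{k}\}$ is a basis for $\homo^{r,2r}(\Rips(P))$ of total length $\sum_{i}\length(g_{i})$ if and only if it is a basis for $\homo(\Cech^{r}(P))$ of the same total length, so shortest bases of the two groups coincide. The one delicate step is the concrete realization of $\psi$ as an identity-on-chains map rather than merely an abstract isomorphism; once that identification is set up, the conclusion is pure bookkeeping.
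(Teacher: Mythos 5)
Your proposal is correct and follows essentially the same route as the paper: the first claim by composing the isomorphisms of Proposition~\ref{cech-mani} and Proposition~\ref{cech-rips}, and the second by exploiting the common $1$-skeleton of $\Cech^{r}(P)$ and $\Rips^{r}(P)$ together with the fact that the inclusion-induced composite into $\homo(\Cech^{4r}(P))$ is an isomorphism (which is exactly how the paper argues that independence survives in both directions). Your explicit realization of the isomorphism as identity-on-chains is a slightly more careful packaging of the paper's two-way length comparison, but the substance is identical.
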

\begin{proof}
From Proposition~\ref{cech-mani}
and Proposition~\ref{cech-rips}, we have 
$$\homo^{r,2r}(\Rips(P)) \hsim \homo(\Cech^{r}(P)) \hsim  \homo(M).$$
Let $A=\{a_1, \cdots, a_k\}$ be a shortest basis for 
$\homo^{r,2r}(\Rips(P))$. 
Each $a_i$ is a loop in $\Rips^{r}(P)$ and hence in $\Cech^{r}(P)$. 
Obviously $A$ is a basis of $\homo(\Cech^{r}(P))$ as the 
inclusion map from $\Cech^{r}(P)$
to $\Rips^{r}(P)$ induces a homomorphism. 
Thus, a shortest basis for $\homo(\Cech^{r}(P))$ 
must be no longer than that of $\homo^{r,2r}(\Rips(P))$. 
Similarly if  
$A=\{a_1, \cdots, a_k\}$ is a shortest basis of $\homo(\Cech^{r}(P))$, 
then each $a_i$ must be in $\Rips^{r}(P)$ and survive 
in $\Rips^{2r}(P)$ as it must survive in $\Cech^{4r}(P)$.
Thus $A$ is a basis for $\homo^{r,2r}(\Rips(P))$ 
and hence a shortest basis of $\homo^{r,2r}(\Rips(P))$
is no longer than that of $\homo(\Cech^{r}(P))$. 
This proves the proposition. 
\end{proof}


\subsection{Bounding the lengths}
Our idea is to argue that a shortest basis of 
$\homo(\Cech^{r}(P))$ can be pulled back to a basis of $\homo(M)$ 
by the map $h$
of Proposition~\ref{cech-mani}. We argue that the lengths of the
generators cannot change too much in the process.

Let $g$ be any closed curve in $M$. Following~\cite{BSLT00}, we define a
procedure to approximate $g$ by a loop $\hat{g}$ in the $1$-skeleton of
$\Cech^{r}(P)$. This procedure called {\em Decomposition method} is not
part of our algorithm, but is used in our argument about length
approximations of loops in $M$.

\noindent
\paragraph{Decomposition method:}
If $\ell=\length(g) > r-2\eps > 0$, we can write
$\ell=\ell_0+(\ell_1+\ell_1+\ldots +\ell_1) + \ell_0$ where
$\ell_1=r-2\eps$ and $r-2\eps> \ell_0 \geq (r-2\eps)/2$.
Starting from an arbitrary point, say $x$,
split $\g$ into pieces whose lengths coincide with the decomposition
of $\ell$. This produces a sequence of points $x=x_0,x_1,\ldots, x_m=x$
along $\g$ which divide it according to the lengths constraints.
Because of our sampling condition, each point $x_i$ has a point
$p_i\in P$ within $\eps$ distance. We define a loop
$\hat{g}=\{p_0p_1\ldots p_m\}$
with consecutive points joined by line segments.
Proposition~\ref{prop:lbound} shows that $\hat{g}$ resides
in the $1$-skeleton of $\Cech^{r}(P)$ (proof in the Appendix).

\begin{proposition}
Given a closed curve $\g$ on $M$ with $\length(g)> r-2\eps > 0$, 
{\em Decomposition method} finds a loop $\hat{g}$ from the $1$-skeleton 
of $\Cech^{r}(P)$ 
such that: 
$\length(\hat{g}) \le \frac{r}{r -2\eps} \length(\g)$. 
\label{prop:lbound}
\end{proposition}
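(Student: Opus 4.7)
The plan has two components: first verify that the constructed loop $\hat{g}$ actually lies in the $1$-skeleton of $\Cech^{r}(P)$, and second bound its total Euclidean length by $\frac{r}{r-2\eps}\length(g)$.

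For the feasibility part, I would observe that an edge $p_ip_{i+1}$ belongs to $\Cech^{r}(P)$ exactly when $B(p_i,r/2)\cap B(p_{i+1},r/2)\neq\emptyset$, i.e.\ when $d(p_i,p_{i+1})\leq r$. The triangle inequality gives
\[
d(p_i,p_{i+1}) \;\leq\; d(p_i,x_i) + d(x_i,x_{i+1}) + d(x_{i+1},p_{i+1}) \;\leq\; \eps + d(x_i,x_{i+1}) + \eps,
\]
and since $d(x_i,x_{i+1})$ is bounded above by the arc length along $g$ between $x_i$ and $x_{i+1}$, it is at most $\ell_1 = r-2\eps$ for a middle piece, or at most $\ell_0 < r-2\eps$ for a boundary piece. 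In either case $d(p_i,p_{i+1})\leq r$, so every edge of $\hat{g}$ sits in $\Cech^{r}(P)$.

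For the length bound, I would sum the per-edge estimate $d(p_i,p_{i+1}) \leq \text{arc}(x_i,x_{i+1}) + 2\eps$ over all $m$ edges, which yields $\length(\hat{g})\leq \ell + 2\eps m$. The crucial point is to control $m$ in terms of $\ell/\ell_1$. Writing $\ell = 2\ell_0 + (m-2)\ell_1$ and using the constraint $\ell_0\geq (r-2\eps)/2 = \ell_1/2$ to consolidate the two boundary pieces, one gets $\ell \geq m\ell_1$, so $m\leq \ell/\ell_1 = \ell/(r-2\eps)$. Substituting gives
\[
\length(\hat{g}) \;\leq\; \ell + \frac{2\eps\,\ell}{r-2\eps} \;=\; \frac{r}{r-2\eps}\,\ell,
\]
which is the desired bound.

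The only subtle point, and the one I expect to require the most care, is accounting precisely for the two short end-pieces of length $\ell_0$. A naive per-edge bound treats each boundary edge independently and pays an additional $2\eps$ overhead that is not amortized by the $\ell_1$ denominator. The choice $\ell_0\geq \ell_1/2$ in the decomposition is exactly what is needed to pair the two short pieces into one full "unit" of length $\geq \ell_1$, so that the total number of edges stays at most $\ell/\ell_1$ rather than $\ell/\ell_1+1$; once this bookkeeping is handled correctly, the rest of the proof is just triangle inequalities.
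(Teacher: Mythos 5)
Your approach is the same as the paper's: bound each edge by the arc length of its piece plus $2\eps$ via the triangle inequality, observe that this is at most $r$ so the edge lies in $\Cech^{r}(P)$, and then amortize the $2\eps$ per-edge overhead against the decomposition of $\length(g)$. The feasibility half is correct and matches the paper exactly.

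The length half, however, breaks at precisely the point you single out as the crux. From $\ell=2\ell_0+(m-2)\ell_1$ and $\ell_0\geq\ell_1/2$ one gets $\ell\geq\ell_1+(m-2)\ell_1=(m-1)\ell_1$, \emph{not} $\ell\geq m\ell_1$; the latter would require $\ell_0\geq\ell_1$, contradicting $\ell_0<r-2\eps=\ell_1$. Merging the two end pieces into one unit of length $\geq\ell_1$ reduces the unit count from $m$ to $m-1$, so it only yields $m\leq\ell/(r-2\eps)+1$, and your argument then gives $\length(\hat g)\leq\frac{r}{r-2\eps}\ell+2\eps$ rather than the stated multiplicative bound. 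In fairness, the paper's own proof has the same slack hidden in a different place: it asserts $d(p_0,p_1)\leq\frac{r}{r-2\eps}\ell_0$ for the two boundary edges, but the triangle inequality only gives $\ell_0+2\eps$, and $\ell_0+2\eps\leq\frac{r}{r-2\eps}\ell_0$ would again require $\ell_0\geq r-2\eps$; the two end edges each leak up to $\eps$, for the same additive $2\eps$ total. This discrepancy is harmless where the proposition is used (it is applied to geodesic loops with $\length(g)>2\rho_c(M)\geq 2r$, so the additive $2\eps$ is absorbable), but as a proof of the proposition exactly as stated, your step $\ell\geq m\ell_1$ does not hold and the bound you derive is weaker by $2\eps$.
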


Consider a basis of $\homo(M)$ where each generator is a closed
geodesic on $M$. For a smooth, compact manifold such a basis always
exists by a well known result in differential geometry~\cite{docarmo}.
Let $G=\{\g_1,\ldots,\g_k\}$ be this set of geodesic 
loops. By Proposition~\ref{prop:lbound}, we claim that there is a set of
loops $\hat{G}=\{\hat\g_1,\ldots,\hat\g_k\}$ in $\Cech^{r}(P)$
whose length is within a small factor of the length of $G$.
However, we need to show that $\hat{G}$ indeed generates 
$\homo(\Cech^{r}(P))$.
We show this by mapping each $\hat g_j\in \hat{ G}$
to $M$ by the homotopy equivalence $h$ and arguing that
$[h(\hat\g_j)]=[\g_j]$ in $\homo(M)$.
Since $h$ is a homotopy equivalence map, it follows that 
the isomorphism $h^*:\homo(\Cech^{r}(P))\rightarrow \homo(M)$
maps the class $[\hat g_j]$ to $[\g_j]$. This implies that
$\hat {G}$ generates $\homo(\Cech^{r}(P))$.

To prove that $h(\hat g_j)$ is a representative of the class
$[g_j]$, we consider a tubular neighborhood of $g_j$
of radius $r$ which is smaller than the convexity
radius $\rho_c(M)$. Then, we show that each segment $p_ip_{i+1}$
of $\hat g_j$ is mapped to a curve $h(p_ip_{i+1})$ which
lies within this tubular neighborhood. Because of this
containment, $h(p_ip_{i+1})$ must be
homotopic to the geodesic segment $\gamma(x_i,x_{i+1})$ of $g_j$. 
All these homotopies together provide
a homotopy between $h(g_j)$ and $g_j$.
First we show that the tubular neighborhood of a segment of $g_j$
that we consider is indeed simply connected (see the Appendix for proof).

\begin{proposition}
Let $\gamma=\gamma(p,q)$ be a minimizing geodesic between two points 
$p, q \in M$. Consider its
tubular neighborhood $\Tub_s(\gamma)$ on $M$ that
consists of the points on $M$ within a
geodesic distance $s$ from $\gamma$, i.e.,
$\Tub_s(\gamma)=\{x\in M:  \min_{y\in \gamma} d_M(x, y)< s \}$. Then
if $s<\rho_c(M)$, $\Tub_s(\gamma)$ is contractible, 
in particular, $\Tub_s(\gamma)$ is simply connected.
\label{tub-prop}
\end{proposition}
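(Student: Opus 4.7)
The plan is to build an explicit deformation retraction of $\Tub_s(\gamma)$ onto the geodesic arc $\gamma([0,L])$, which is homeomorphic to an interval (or to a point if $p=q$) and therefore contractible. Parametrize $\gamma$ by arc length so that $\gamma:[0,L]\to M$. Since $\gamma$ is compact, the infimum defining $\Tub_s$ is attained, and $\Tub_s(\gamma)=\bigcup_{t\in[0,L]}B_M(\gamma(t),s)$.

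The first technical step is to produce a single-valued, continuous foot-point map $x\mapsto t(x)\in[0,L]$ sending $x$ to a parameter realizing $\min_t d_M(x,\gamma(t))$. For uniqueness, suppose $t_1<t_2$ both attain the minimum value $d^*<s$. Since $s<\rho_c(M)$, the ball $B_M(x,s)$ is convex, so the unique minimizing geodesic joining $\gamma(t_1)$ and $\gamma(t_2)$ lies inside it; that geodesic is precisely $\gamma|_{[t_1,t_2]}$ because any subsegment of a minimizing geodesic is itself minimizing. Inside a convex ball the function $y\mapsto \tfrac12 d_M(x,y)^2$ has positive-definite Hessian and hence is strictly convex along geodesics, so $t\mapsto d_M(x,\gamma(t))^2$ is strictly convex on $[t_1,t_2]$; equal endpoint values then force a strictly smaller interior value, contradicting minimality at $t_1$. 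Continuity of $t(\cdot)$ follows from compactness of $\gamma$ together with uniqueness of the argmin.

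I then define $H:\Tub_s(\gamma)\times[0,1]\to M$ by $H(x,\tau)=\sigma_x(\tau)$, where $\sigma_x$ is the unique minimizing geodesic from $x$ to $\gamma(t(x))$ reparametrized on $[0,1]$ (uniqueness of $\sigma_x$ holds because $d_M(x,\gamma(t(x)))<s<\rho_c(M)$). Continuous dependence of minimizing geodesics on their endpoints inside a convex ball yields continuity of $H$ in $(x,\tau)$. The image stays in $\Tub_s(\gamma)$: $\sigma_x$ lies in the convex ball $B_M(x,s)$ and moreover $d_M(\sigma_x(\tau),\gamma(t(x)))=(1-\tau)\,d_M(x,\gamma(t(x)))<s$, so $\sigma_x(\tau)\in B_M(\gamma(t(x)),s)\subset \Tub_s(\gamma)$. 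Thus $H$ is a deformation retraction of $\Tub_s(\gamma)$ onto $\gamma([0,L])$; post-composing with a contraction of the arc to a single point delivers a contraction of $\Tub_s(\gamma)$.

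The main technical obstacle is the strict convexity of $\tfrac12 d_M(x,\cdot)^2$ along geodesics inside $B_M(x,\rho_c(M))$, which is what drives the uniqueness of the foot point $t(x)$. This is a standard fact from Riemannian geometry, but it is precisely the place where the hypothesis $s<\rho_c(M)$ is essential: without it, the ball $B_M(x,s)$ may fail to be convex, multiple foot points may appear, and neither the retraction nor its continuity can be set up.
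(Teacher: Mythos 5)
Your proof is correct and follows essentially the same route as the paper: project each point of $\Tub_s(\gamma)$ to its unique nearest point on $\gamma$ (uniqueness coming from $s<\rho_c(M)$) and deformation retract along the minimizing geodesics to those foot points. You supply the details the paper leaves implicit — in particular the strict-convexity argument for uniqueness of the foot point and the verification that the homotopy stays inside the tube — but the underlying construction is identical.
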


\begin{proposition}
Let $P\subset M$ be an $\eps$-sample and 
$4\eps\leq r \leq \min\{\frac{1}{2}\rho(M), \rho_c(M)\}$. 
If $\hat g$ is the loop on $\Cech^{r}(P)$ constructed from a geodesic loop $g$ in $M$ by {\em Decomposition method}, 
then $[h(\hat g)] = [g]$ where $h$ is the homotopy equivalence defined in Proposition~\ref{cech-mani}.
\label{piece-homotopy}
\end{proposition}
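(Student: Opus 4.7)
The plan is to analyze $\hat g$ edge-by-edge, showing that each image $h(p_ip_{i+1})$ is homotopic, rel endpoints, to a short connector-then-geodesic-then-connector path along $g$. The connectors will cancel in pairs when the local homotopies are concatenated around the loop, leaving a global homotopy between $h(\hat g)$ and $g$, which gives $[h(\hat g)]=[g]$ in $\homo(M)$.

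First I would set up notation. Write $\hat g = p_0p_1\cdots p_m$ with $p_m = p_0$, and let $x_0,x_1,\ldots,x_m=x_0$ be the corresponding sequence on $g$ from the Decomposition method, so $d(p_i,x_i)\le\eps$ and the geodesic pieces $\gamma_i := \gamma(x_i,x_{i+1})\subset g$ have length $\le r-2\eps$. By Proposition~\ref{cech-mani}, the image of the edge $p_ip_{i+1}$ satisfies $h(p_ip_{i+1})\subset M\cap(B(p_i,r)\cup B(p_{i+1},r))$. Combined with $d(p_i,x_i)\le\eps\le r/4$, every $y\in h(p_ip_{i+1})$ has Euclidean distance at most $r+\eps$ from $x_i$ or $x_{i+1}$. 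Applying Proposition~\ref{euclid-geod} converts this into a bound on $d_M(y,x_i)$ (resp.\ $d_M(y,x_{i+1})$); together with the hypothesis $r\le\rho_c(M)$, my plan is to choose some $s<\rho_c(M)$ so that this estimate yields $h(p_ip_{i+1})\subset \Tub_s(\gamma_i)$.

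Once this containment is in place, Proposition~\ref{tub-prop} supplies the simple connectivity of $\Tub_s(\gamma_i)$. Choose a short minimizing geodesic $\alpha_i$ from $p_i$ to $x_i$ on $M$; since $d(p_i,x_i)\le\eps$ is small, $\alpha_i$ exists and lies inside $\Tub_s(\gamma_i)\cap\Tub_s(\gamma_{i-1})$. The two paths $h(p_ip_{i+1})$ and $\alpha_i\cdot\gamma_i\cdot\alpha_{i+1}^{-1}$ both run from $p_i$ to $p_{i+1}$ inside the simply connected region $\Tub_s(\gamma_i)$, so they are homotopic rel endpoints.

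Finally, I would glue: concatenating around the loop, each $\alpha_{i+1}^{-1}$ at the end of piece $i$ meets $\alpha_{i+1}$ at the start of piece $i+1$ and cancels, reducing the target loop to $\alpha_0\cdot\gamma_0\cdot\gamma_1\cdots\gamma_{m-1}\cdot\alpha_0^{-1}$, which is freely homotopic to $g$. This produces a global homotopy from $h(\hat g)$ to $g$, and in particular $[h(\hat g)]=[g]$ in $\homo(M)$. The step I expect to be the main obstacle is the first containment: juggling the constants in Proposition~\ref{euclid-geod} to ensure that the geodesic radius of the tube around $\gamma_i$ that catches $h(p_ip_{i+1})$ remains strictly below $\rho_c(M)$ using only $4\eps\le r\le\min\{\rho(M)/2,\rho_c(M)\}$, and that the short connectors $\alpha_i$ live in the same tube so the local homotopies can be consistently glued.
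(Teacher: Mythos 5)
Your strategy is exactly the paper's: contain each $h(p_ip_{i+1})$ in a tubular neighborhood of the corresponding geodesic subarc $\gamma(x_i,x_{i+1})$, invoke Proposition~\ref{tub-prop} for contractibility, build local homotopies rel short connectors from $p_i=h(p_i)$ to $x_i$, and glue them so the connectors cancel around the loop. The one step you flagged as the likely obstacle is indeed where your write-up currently fails, and the resolution is a single observation you are missing: Proposition~\ref{cech-mani} is stated for $\Cech^{2r}(P)$ as the nerve of the balls $B(p,r)$, so when it is applied to $\Cech^{r}(P)$ the relevant balls have radius $r/2$, not $r$. Hence $h(y)$ for $y\in p_ip_{i+1}$ lies within Euclidean distance $r/2$ of $p_i$ or $p_{i+1}$, so within $r/2+\eps\le 3r/4$ of $x_i$ or $x_{i+1}$; since $3r/4\le 3\rho(M)/8<\rho(M)/2$, Proposition~\ref{euclid-geod} applies and gives geodesic distance strictly less than $r\le\rho_c(M)$, so $\Tub_r(\gamma(x_i,x_{i+1}))$ works. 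With your stated containment in $B(p_i,r)\cup B(p_{i+1},r)$ you would get Euclidean distance up to $r+\eps=5r/4$, which exceeds the $\rho(M)/2$ threshold needed even to invoke Proposition~\ref{euclid-geod} and in any case forces a tube radius exceeding $\rho_c(M)$ at the top of the parameter range, so the argument as written does not close.

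Two smaller points you should make explicit: the Decomposition method is applicable because a closed geodesic satisfies $\length(g)>2\rho_c(M)\ge 2r>r-2\eps$; and the subarc of $g$ between $x_i$ and $x_{i+1}$ has length $r-2\eps<\rho_c(M)$ and is therefore itself the minimizing geodesic $\gamma(x_i,x_{i+1})$, which is needed for $\Tub_r(\gamma(x_i,x_{i+1}))$ to be the object covered by Proposition~\ref{tub-prop} and for the tube to actually contain the arc of $g$ you are homotoping onto. Once these are in place your gluing argument is the same as the paper's and the conclusion $[h(\hat g)]=[g]$ follows.
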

\begin{proof}
Since $g$ is a geodesic loop,
it follows from standard results in differential geometry~\cite{docarmo}
that $\length(g) > 2\rho_c(M)$. 
Thus $\hat g$ can be constructed from a geodesic loop $g$ using {\em Decomposition method}.
Each vertex $p_i$ of $\hat g$ is within an $\eps$ Euclidean distance 
from the point $x_i$ in $g$. Next, notice that, since
$\Cech^r(P)$ uses balls of radius $r/2$, the stated range
of $r$ satisfies the condition of Proposition~\ref{cech-mani}.
By Proposition~\ref{cech-mani},
for any point $y$ on the segment $p_ip_{i+1}$, $h(y)$ is within
$r/2$ Euclidean distance to either $p_i$ or $p_{i+1}$. 
This implies that $h(y)$ is within $r/2+\eps$ Euclidean distance, 
and hence, by Proposition~\ref{euclid-geod}, within
$r$ geodesic distance 
to either $x_i$ or $x_{i+1}$.  In addition, since the sub-curve of the 
geodesic loop $g$ between $x_i$ and $x_{i+1}$, denoted $g(x_i, x_{i+1})$, is of length 
$\ell_1=r - 2\eps < \rho_c(M) $,  $g(x_i, x_{i+1})$ is the a minimizing geodesic between
$x_i$ and $x_{i+1}$. Therefore $h(p_ip_{i+1})\in  \Tub_{r}(\gamma(x_i,x_{i+1}))$ 
In particular, the geodesics $\gamma(x_i,h(p_i))$ and 
$\gamma(x_{i+1},h(p_{i+1}))$ reside in $\Tub_{r}(\gamma(x_i,x_{i+1}))$.

Consider the loop formed by the three geodesic segments $\gamma(x_i,x_{i+1})$, 
$\gamma(x_i,h(p_i))$, 
$\gamma(x_{i+1},h(p_{i+1}))$, and the curve $h(p_ip_{i+1})$. From 
Proposition~\ref{tub-prop}, this cycle is contractible in $M$ 
as it resides in $\Tub_{r}(\gamma(x_i,x_{i+1}))$. In fact,
there is a homotopy $H_i$ that takes $h(p_ip_{i+1})$ 
to $\gamma(x_i,x_{i+1})$ while
$H_i$ keeps $h(p_i)$ and $h(p_{i+1})$ on the geodesics
$\gamma(x_i,p_i)$ and $\gamma(x_{i+1},p_{i+1})$ respectively. We can combine all
homotopies $H_i$ for $0\leq i \leq m$ to define a homotopy
between $h(\hat g)$ and $g$. It follows that $[h(\hat g)]=[g]$. 
\end{proof}

\begin{proposition}
Let $P\subset M$ be an $\eps$-sample and 
$4\eps\leq r \leq \min\{\frac{1}{2}\rho(M), \rho_c(M)\}$. 
If $G=\{g_1,\ldots,g_k\}$ and $G'=\{g_1',\ldots,g_k'\}$
are the generators of a shortest basis of 
$\homo(M)$ and $\homo(\Cech^{r}(P))$ respectively, 
then we have $\length(G') \leq (1+\frac{4\eps}{r})\length(G)$.
\label{MRips}
\end{proposition}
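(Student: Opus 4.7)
\noindent
\textbf{Proof plan for Proposition~\ref{MRips}.} The plan is to produce, from the shortest geodesic basis $G=\{g_1,\ldots,g_k\}$ of $\homo(M)$, an explicit basis $\hat G=\{\hat g_1,\ldots,\hat g_k\}$ of $\homo(\Cech^{r}(P))$ whose total length is controlled by $\length(G)$, and then use the optimality of $G'$ to compare. This is almost entirely a matter of assembling the ingredients already developed.

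First, I would recall that since each $g_j$ is a closed geodesic on the smooth closed manifold $M$, standard differential-geometric results give $\length(g_j)>2\rho_c(M)\geq 2r > r-2\eps$, so the \emph{Decomposition method} applies to every $g_j$. Applying it produces a loop $\hat g_j$ in the $1$-skeleton of $\Cech^{r}(P)$ with
\[
\length(\hat g_j)\;\leq\;\tfrac{r}{r-2\eps}\,\length(g_j)
\]
by Proposition~\ref{prop:lbound}. Summing over $j$ yields $\length(\hat G)\leq \tfrac{r}{r-2\eps}\length(G)$.

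Next, I would check that $\hat G$ is in fact a basis of $\homo(\Cech^{r}(P))$. The hypotheses on $r$ fall within the range required by Proposition~\ref{piece-homotopy}, so $[h(\hat g_j)]=[g_j]$ in $\homo(M)$ for each $j$. Since $h$ is a homotopy equivalence (Proposition~\ref{cech-mani}), the induced map $h_*\colon \homo(\Cech^{r}(P))\to \homo(M)$ is an isomorphism sending $[\hat g_j]$ to $[g_j]$. As $\{[g_1],\ldots,[g_k]\}$ is a basis of $\homo(M)$, the preimages $\{[\hat g_1],\ldots,[\hat g_k]\}$ form a basis of $\homo(\Cech^{r}(P))$. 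In particular, $\hat G$ is a valid basis, so by optimality of $G'$,
\[
\length(G')\;\leq\;\length(\hat G)\;\leq\;\tfrac{r}{r-2\eps}\,\length(G).
\]

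Finally, a short algebraic check converts the prefactor into the claimed form. Using $r\geq 4\eps$, one has $4\eps/r\leq 1$, hence $(1+\tfrac{4\eps}{r})(1-\tfrac{2\eps}{r})=1+\tfrac{2\eps}{r}-\tfrac{8\eps^{2}}{r^{2}}\geq 1$, which rearranges to $\tfrac{r}{r-2\eps}\leq 1+\tfrac{4\eps}{r}$. Combining with the previous inequality gives $\length(G')\leq (1+\tfrac{4\eps}{r})\length(G)$, which is the desired bound. I do not anticipate a serious obstacle here: the length estimate comes directly from Proposition~\ref{prop:lbound}, the basis property of $\hat G$ is immediate from the homotopy-class equality of Proposition~\ref{piece-homotopy}, and the sharpening of the constant is an elementary inequality using $r\geq 4\eps$.
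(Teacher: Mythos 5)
Your proposal is correct and follows essentially the same route as the paper: construct $\hat G$ via the Decomposition method, bound its length with Proposition~\ref{prop:lbound}, use Proposition~\ref{piece-homotopy} and the homotopy equivalence $h$ to see that $\hat G$ is a basis of $\homo(\Cech^{r}(P))$, and compare with $G'$ by optimality. Your explicit verification that $\tfrac{r}{r-2\eps}\leq 1+\tfrac{4\eps}{r}$ when $r\geq 4\eps$ is a step the paper leaves implicit, and it checks out.
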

\begin{proof}
It is obvious that any $g_i$ must be a geodesic loop.  
Let $\hat g_i$ be the loop constructed 
by {\em Decomposition method} in the $1$-skeleton of $\Cech^{r}(P)$. 
Thus, we have
a set $\hat G=\{\hat g_1,\cdots, \hat g_k\}$. 
By Proposition~\ref{piece-homotopy}, there 
is a homotopy equivalence 
$h: \Cech^{r}(P)\rightarrow M$ so that $[h(\hat g_j)] = [g_i]$, 
which means that $\hat G$ is also a basis of $\homo(\Cech^{r}(P))$.  
By Proposition~\ref{prop:lbound}, 
$$\length(G') \leq \length(\hat G) \leq \frac{r}{r -2\eps} \length(G)  \leq  (1+\frac{4\eps}{r})\length(G).$$ 
\end{proof}

We now consider the opposite direction, and provide a lower bound 
for the total length of a shortest basis of $\homo(\Cech^{r}(P))$ 
in terms of the length of a shortest basis of $\homo(M)$. 

\newcommand{\geodesic}	{\gamma}

\begin{proposition}
Let $P\subset M$ be an $\eps$-sample and 
$4\eps\leq r \leq \min\{\frac{1}{2}\rho(M), \rho_c(M)\}$. 
Let $G$ and $G'$ be defined as in Proposition~\ref{MRips}.
We have $\length{G}\leq (1+\frac{4r^2}{3\rho^2(M)})\length(G')$.
\label{prop:fromRipstoM}
\end{proposition}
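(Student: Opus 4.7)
The plan is to construct an explicit basis $\tilde G = \{\tilde g_1, \ldots, \tilde g_k\}$ of $\homo(M)$ by lifting the loops of $G'$ from $\Cech^r(P)$ to $M$ with $\length(\tilde G) \leq (1+\frac{4r^2}{3\rho^2(M)})\length(G')$. Since $G$ is a shortest basis of $\homo(M)$, one then has $\length(G) \leq \length(\tilde G)$ and the claim follows.

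Each $g_i'$ is a cycle of straight edges $p_jp_{j+1}$ in the $1$-skeleton of $\Cech^r(P)$, and by definition of the \v{C}ech complex, $d(p_j,p_{j+1}) \leq r$. I would replace each such edge by a minimizing geodesic $\gamma(p_j,p_{j+1})$ on $M$ (which exists by compactness), and set $\tilde g_i$ to be the concatenation of these geodesic segments. The length bound then follows directly from Proposition~\ref{euclid-geod}: since $d(p_j,p_{j+1}) \leq r \leq \rho(M)/2$, $d_M(p_j,p_{j+1}) \leq (1+\tfrac{4d^2(p_j,p_{j+1})}{3\rho^2(M)}) d(p_j,p_{j+1}) \leq (1+\tfrac{4r^2}{3\rho^2(M)}) d(p_j,p_{j+1})$, and summing over all edges of all $g_i'$ yields $\length(\tilde G) \leq (1+\tfrac{4r^2}{3\rho^2(M)}) \length(G')$.

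The nontrivial part is to verify that $\tilde G$ is actually a basis of $\homo(M)$. Let $h : \Cech^r(P) \to M$ be the homotopy equivalence of Proposition~\ref{cech-mani}. It suffices to show $[\tilde g_i] = h_*([g_i'])$ in $\homo(M)$; since $h_*$ is an isomorphism and $\{[g_i']\}$ is a basis of $\homo(\Cech^r(P))$, this immediately gives that $\{[\tilde g_i]\}$ is a basis of $\homo(M)$. To establish the homotopy $h(g_i') \simeq \tilde g_i$ I would argue edge-by-edge. By Proposition~\ref{cech-mani}, $h(p_jp_{j+1}) \subset M \cap (B(p_j,r/2) \cup B(p_{j+1},r/2))$, so every point $y \in h(p_jp_{j+1})$ satisfies $d(y,p_j) \leq r/2$ or $d(y,p_{j+1}) \leq r/2$. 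Applying Proposition~\ref{euclid-geod} to these pairs, $d_M(y,p_j)$ (or $d_M(y,p_{j+1})$) is bounded by $(1+\tfrac{r^2}{3\rho^2(M)})(r/2)$, which is strictly less than $\rho_c(M)$ under the hypothesis $r \leq \min\{\rho(M)/2,\rho_c(M)\}$. Hence $h(p_jp_{j+1}) \subset \Tub_s(\gamma(p_j,p_{j+1}))$ for some $s < \rho_c(M)$, which is simply connected by Proposition~\ref{tub-prop}. Since both $h(p_jp_{j+1})$ and $\gamma(p_j,p_{j+1})$ share the endpoints $p_j,p_{j+1}$ and lie inside this simply connected tube, they are homotopic rel endpoints; concatenating these edge-wise homotopies yields the required homotopy between $h(g_i')$ and $\tilde g_i$ in $M$.

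The main obstacle I expect is the tubular neighborhood step: one must verify carefully that the image $h(p_jp_{j+1})$ remains inside a simply connected neighborhood of $\gamma(p_j,p_{j+1})$, which hinges on converting the Euclidean bound $r/2$ into a geodesic bound strictly below $\rho_c(M)$ via Proposition~\ref{euclid-geod} and the joint constraint $r \leq \min\{\rho(M)/2, \rho_c(M)\}$. The rest (length summation and pulling back a basis through the isomorphism $h_*$) is routine once this homotopy is in hand.
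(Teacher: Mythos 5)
Your proposal is correct and follows essentially the same route as the paper's proof: replace each edge of $g_i'$ by the minimizing geodesic, bound the length via Proposition~\ref{euclid-geod}, and establish that the geodesic loops form a basis by showing $h(p_jp_{j+1})$ lies in a simply connected tubular neighborhood of $\gamma(p_j,p_{j+1})$ (Propositions~\ref{cech-mani} and~\ref{tub-prop}) so that $h(g_i')$ is homotopic to the geodesic replacement. Your explicit conversion of the Euclidean $r/2$ bound into a geodesic bound strictly below $\rho_c(M)$ is a slightly more careful rendering of the same step the paper performs.
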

\begin{proof}
We construct a set of loops in $M$ from $G'$. First, we show that
the length of these loops is at most $(1+\frac{4r^2}{3\rho^2(M)})$ 
times the length
of $G'$. Next, we show that the constructed loops generate $\homo(M)$.

For each loop $g'\in G'$, we construct $\bar{g}$ as follows. The vertices 
and edges of $g'$ are the vertices and edges of $\Cech^r(P)$. 
For an edge $e = pq \in g'$, $p, q \in P$ thus $p, q \in M$. 
We connect $p$ and $q$ by the geodesic $\gamma(p,q)$ on $M$, 
and map $e$ to this geodesic. Mapping each edge in $g'$ on $M$, 
we obtain $\bar{g}$. Thus we obtain a set $\bar G = \{\bar g_1,\cdots, \bar g_k\}$. 
By Proposition~\ref{euclid-geod}, $d_M(p,q)\leq (1+\frac{4d^2(p, q)}{3\rho^2(M)})d(p, q)\leq
(1+\frac{4r^2}{3\rho^2(M)})d(p,q)$. Hence the length bound follows. 

We now show that the set $\bar G$ is a basis for $M$. 
Consider mapping $g_j'\in G'$ to $M$ by the equivalence map
$h$. Each edge $e = pq \in g_j'$ is mapped to a curve
$h(pq)$. From Proposition~\ref{cech-mani}, 
we have that $h(p) = p$ and $h(q) = q$
and each point of $h(pq)$ is within  $r/2$ Euclidean distance and hence
$r$ geodesic distance to either $p$ or $q$. This implies that
$h(pq) \subset \Tub_{r}(\gamma(p,q))$.  Then, by using similar argument
as in Proposition~\ref{piece-homotopy}, we claim that 
$\gamma(p,q)$ and $h(pq)$ are
homotopic. Combining all homotopies for each edge of $g_j'$,
we get that $h(g_j')$ is homotopic to $\bar{g_j}$.
Since $h$ is a homotopy equivalence, $h(G')$ 
and hence $\bar G=\{\bar{g_1},\ldots,\bar{g_k}\}$ are a basis of $\homo(M)$.
Therefore,
$$\length(G)\leq \length(\bar G) \leq (1+\frac{4r^2}{3\rho^2(M)}) \length(G').$$
\end{proof}

For an appropriate range of $r$, shortest bases in $\Cech^{r}(P)$ and 
$\homo^{r,2r}(\Rips(P))$ are same by Proposition~\ref{cech-rips-gen}.
\begin{theorem}
Let $P\subset M$ be an $\eps$-sample and $4\eps\leq r \leq \min\{\frac{1}{2}\sqrt{\frac{3}{5}}\rho(M), \rho_c(M) \}$. 
Let $G$ and $G'$ be a shortest basis of 
$\homo(M)$ and $\homo^{r,2r}(\Rips(P))$ respectively.
We have $\frac{1}{1+\frac{4r^2}{3\rho^2(M)}}\length(G)\leq \length(G')
\leq (1+\frac{4\eps}{r})\length(G)$.
\label{thm:lengthbound}
\end{theorem}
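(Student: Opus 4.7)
The plan is to assemble the theorem from the three key propositions already established, with the main glue being Proposition~\ref{cech-rips-gen} which lets us switch between the shortest basis of $\homo^{r,2r}(\Rips(P))$ and that of $\homo(\Cech^r(P))$.

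First, I would verify that the assumed range $4\eps\leq r\leq \min\{\frac12\sqrt{3/5}\rho(M),\rho_c(M)\}$ is simultaneously strong enough to invoke all three propositions. The stated bound on $r$ is exactly the hypothesis of Proposition~\ref{cech-rips-gen}, and since $\frac12\sqrt{3/5}\rho(M)\le \frac12\rho(M)$, it also satisfies the hypothesis of Propositions~\ref{MRips} and~\ref{prop:fromRipstoM}. So all three are applicable.

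Next, let $G''$ denote a shortest basis of $\homo(\Cech^r(P))$. By Proposition~\ref{cech-rips-gen}, a basis of $\homo^{r,2r}(\Rips(P))$ is shortest if and only if it is a shortest basis of $\homo(\Cech^r(P))$, so in particular $\length(G')=\length(G'')$. Now I apply Proposition~\ref{MRips} directly to $G$ and $G''$ to obtain
\[
\length(G')=\length(G'')\le \Bigl(1+\tfrac{4\eps}{r}\Bigr)\length(G),
\]
which gives the upper bound.

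For the lower bound, I apply Proposition~\ref{prop:fromRipstoM} to $G$ and $G''$ to obtain $\length(G)\le (1+\tfrac{4r^2}{3\rho^2(M)})\length(G'')$, which rearranges (since the multiplicative factor is strictly positive) to
\[
\frac{1}{1+\tfrac{4r^2}{3\rho^2(M)}}\length(G)\le \length(G'')=\length(G').
\]
Combining the two inequalities gives the theorem. The proof is essentially a bookkeeping step with no genuine obstacle, since the geometric and homotopical content has already been absorbed into Propositions~\ref{MRips}, \ref{prop:fromRipstoM}, and~\ref{cech-rips-gen}; the only thing to check carefully is the compatibility of the range assumptions on $r$.
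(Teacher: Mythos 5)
Your proof is correct and matches the paper's own (implicit) argument: the theorem is obtained exactly by combining Propositions~\ref{MRips}, \ref{prop:fromRipstoM}, and~\ref{cech-rips-gen}, after noting that the stated range of $r$ (with $\frac12\sqrt{3/5}\rho(M)\le\frac12\rho(M)$) satisfies the hypotheses of all three. Your careful check of the range compatibility is the only substantive point, and you handled it correctly.
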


Theorem \ref{thm:main} follows from Theorem~\ref{thm:lengthbound}, 
Theorem \ref{thm:SPgenerators}, and the time complexity
analysis in section~\ref{time}. 

\section{Conclusions}
We have given a polynomial time algorithm for approximating a shortest
basis of the first homology group of a smooth manifold from a point data.
We have also presented an algorithm to compute a shortest 
basis for the first homology of any finite simplicial complex.
The question of computing a shortest basis for other homology
groups under $\mathbb{Z}_2$
has been recently settled by Chen and Freedman~\cite{CF10}
who show it to be NP-hard.

We use Rips complexes for computations 
and use \v{C}ech complexes for analysis.
One may observe that \v{C}ech complexes can be
used directly in the algorithm. Since we know that
$\Cech^{r}(P)$ is homotopy equivalent to $M$
for an appropriate range of $r$, we can
compute a shortest basis for $\homo(\Cech^{r}(P))$
which can be shown to approximate a shortest
basis for $\homo(M)$ using our analysis. In technical
terms, this will get rid of the weighting in step 1 and also step 4 of 
{\sc ShortLoop} algorithm, and make Theorem~\ref{thm:SPgenerators}
and Proposition~\ref{cech-rips-gen} redundant. Although the time
complexity does not get affected in the worst-case
sense, computing the triangles for \v{C}ech 
complexes becomes harder numerically in high dimensions
than those for the Rips complexes. This is
why we chose to describe an algorithm using 
the Rips complexes.\\


\newpage

\section*{Appendix}

\myparagraph{Proof of Proposition~\ref{euclid-geod}.}
\begin{proof}
Let $\gamma(t)$ be the minimizing geodesic between $p$ and $q$
parameterized by arclength and set $l = d_M(p, q)$. By Proposition $6.3$
in~\cite{NSW06} we have that $l \le 2d(p,q)$. Let $u_t = \dot{\gamma}(t)$ be the
\emph{unit} tangent vector of $\gamma$ at $t$. We have $t = d_M(p, \gamma(t))$.

Let $B: T_{\gamma(t)} \times T_{\gamma(t)} \rightarrow T_{\gamma(t)}^\perp$ 
be the second
fundamental form associated with the manifold $M$. 
Since $\gamma$ is a geodesic,
$du_t / dt = B(u_t, u_t) = \ddot{\gamma}(t)$. 
Write $\rho=\rho(M)$ and $d=d(p,q)$ for convenience. 
From  Proposition $6.1$ in~\cite{NSW06}, 
we have the norm $\|\ddot{\gamma}(t)\| \leq 1/\rho$ as the norm of the second
fundamental form is bounded by $1/\rho$ in all directions, 
and thus $\|du_t / dt\| \leq 1/\rho$.
Hence we have that $$ \|u_t - u_p\| ~=~ \|\int_{[0,t]} du_y\| ~\leq~ \int_{[0,t]} \frac{1}{\rho}dy = \frac{t}{\rho}
~~~~\Rightarrow~~~~ \sin \frac{\angle (u_p, u_t)}{2} ~\le~ \frac{t}{2\rho}. $$

Furthermore, let $u \cdot v$ denote the dot-product between vectors
$u$ and $v$. Then we have that

\begin{equation*}
\begin{split}
\int_{[0, l]} u_t\cdot u_p~dt &= \int_{[0, l]} \cos \angle (u_t, u_p)~dt 
= \int_{[0, l]} ( 1 - 2\sin^2 \frac{\angle (u_t, u_p)}{2} ) dt \\
&\geq \int_{[0, l]} \left( 1-\frac{t^2}{2\rho^2}\right) dt 
= l - \frac{l^3}{6\rho^2} \\
\end{split}
\end{equation*}

On the other hand, observe that $\int_{[0, l]}
u_t \cdot u_p~dt$ measures the length of the (signed) projection of
$\gamma$ along the direction $u_p$. That is,
$$\int_{[0, l]} u_t \cdot u_p~dl_t = (q-p) \cdot u_p. $$
Hence we have that
\begin{eqnarray*}
d = \| p - q \| \geq (q-p)\cdot u_p \geq l - \frac{l^3}{6\rho^2} \Rightarrow l ~\le~ d + \frac{l^3}{6\rho^2} ~\le~ d + \frac{4d^3}{3\rho^2}~.  
\end{eqnarray*}

The last inequality follows from the fact that 
$l \le 2d$. This proves the lemma.
\end{proof}

\myparagraph{Proof of Proposition~\ref{prop:lbound}.}
\begin{proof}
From the construction and sampling condition,
it follows that, for $1\leq i \leq m-2$,
\begin{eqnarray*}
d(p_i,p_{i+1})&\leq& d(x_i,p_i)+d(x_i,x_{i+1})+d(x_{i+1},p_{i+1})\\
&<& 2\eps +\ell_1=r = \frac{r}{(r-2\eps)}\ell_1
\end{eqnarray*}
Similarly, 
\[
d(p_0,p_1)\leq \frac{r}{r-2\eps}\ell_0
\mbox{ and } 
d(p_{m-1},p_0) \leq \frac{r}{r-2\eps}\ell_0.
\]
Since $\frac{r}{r-2\eps}\ell_0 < r$, each edge
$p_ip_{i+1}$ belongs to $\Cech^{r}(P)$.
Therefore, we obtain a loop $\hat g=p_0p_1\ldots p_m$ in the
$1$-skeleton of $\Cech^{r}(P)$ whose length satisfies:
\begin{eqnarray*}
\length(\hat g)=\Sigma_{i=0}^{m-1}d(p_i,p_{i+1}) \leq 
\frac{r}{r-2\eps}\length(\g).
\end{eqnarray*}
\end{proof}\\

\myparagraph{Proof of Proposition~\ref{tub-prop}.}
\begin{proof}
We show that $\Tub_s(\gamma)$ deformation retracts to $\gamma$.
For any point $x\in \Tub_s(\gamma)$, consider a geodesic ball $B$
of radius $s$. Since $s$ is less than the convexity radius,
$\gamma\cap B$ has a unique point $x_m$ which is at a minimum geodesic
distance from $x$. 
Consider the retraction map $t:\Tub_s(\gamma) \rightarrow \gamma$
where $t(x)=x_m$. One can construct a deformation retraction
that deforms the identity on $\Tub_s(\gamma)$ to $t$ by moving
each point $x$ along the minimizing geodesic path that connect $x$ to 
$x_m$ in $\gamma$. 
\end{proof}

\myparagraph{Proof of Proposition~\ref{cech-cover}.}
\begin{proof}
The proof is based on that of Nerve Lemma in~\cite{Hatcher} (Chapter 4.G). 
Let $\Gamma$ be the barycentric subdivision of $\Cech^{2r}(P)$.
Taking the definitions of the maps $\Delta p$, $\Delta q$, and the
space $\Delta P^r$ from Hatcher~\cite{Hatcher}, we
consider the following sequence
\begin{equation}
\Cech^{2r}(P) \stackrel{h}{\leftrightarrow} \Gamma \stackrel[\Delta p]{\Delta q}{\rightleftarrows} \Delta P^r 
\stackrel{\pi}\rightarrow P^r.
\label{eq:f-ext}
\end{equation}
We prove the proposition by showing $f= \pi \circ \Delta q \circ h$ which
is a homotopy equivalence.
We first introduce the concept of mapping cylinder. For a map $f: X\rightarrow Y$, 
the mapping cylinder $M_f$ is the quotient space of the disjoint union $(X\times I)\bigsqcup Y$ with
$(x, 1)$ identified with $f(x)\in Y$, denoted $M_f = X\bigsqcup_f Y$, see Figure~\ref{mapping_cylinder}(a).
\begin{figure}[h!]
\begin{center}
\begin{tabular}{cc}
\includegraphics[width=0.5\textwidth]{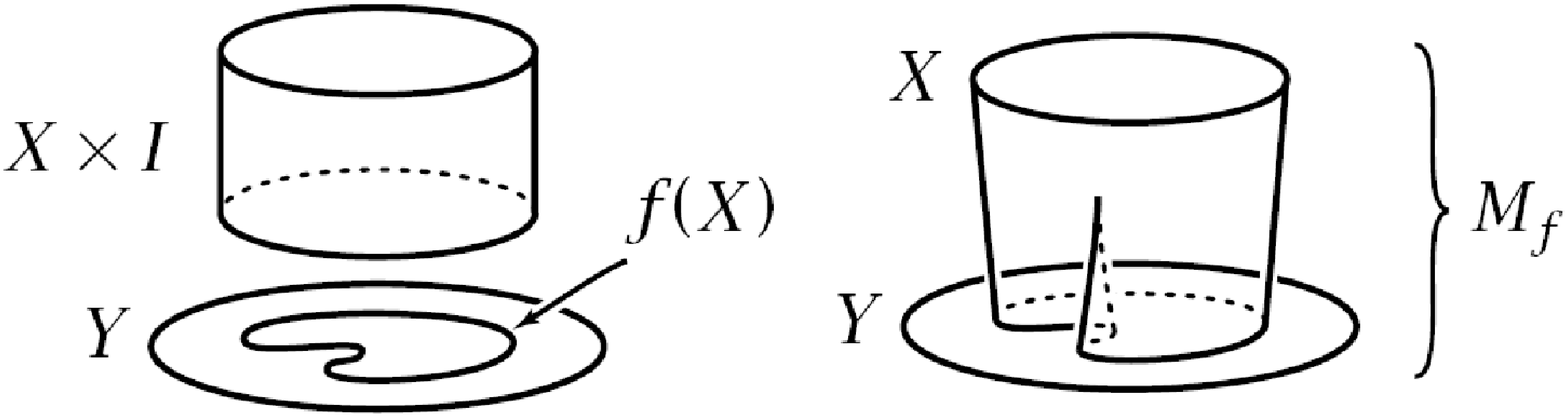} & \includegraphics[width=0.28\textwidth]{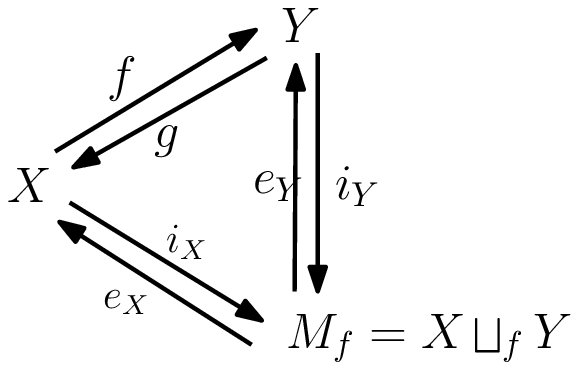} \\
(a) & (b) 
\end{tabular}
\end{center}
\caption{ (a) the mapping cylinder $M_f = X\bigsqcup_f Y$ (courtesy of Hatcher~\cite{Hatcher}); (b) the maps among $X$, $Y$ and $M_f$ 
\label{mapping_cylinder} }
\end{figure}
It is obvious that $M_f$ deformation retracts to $Y$. It is also well-known that $f$ is a homotopy equivalence map
if and only if $M_f$ deformation retracts to $X$, see Figure~\ref{mapping_cylinder}(b), where the map 
$g=e_X \circ i_Y$ is a homotopy equivalence map from $Y$ to $X$.

We are now ready to explain each map in the composition of the map $f$. 
$\Gamma$ is the barycentric subdivision of $\Cech^{2r}(P)$. 
Thus $h$ is an identity map between the underlying spaces 
of $\Cech^{2r}(P)$ and $\Gamma$.
Index the points in $P=\{p_i\}_{i=1}^m$ arbitrarily. Let $B_i = B(p_i, r)$.
To facilitate the argument, label the vertices in $\Gamma$ using $B_i$'s and their finite intersections, 
see Figure~\ref{gamma_delta_u}. Each edge (one simplex) in $\Gamma$ is associated with an inclusion map, 
which induces a sequence of inclusion maps over a simplex of any dimension in $\Gamma$.

\begin{figure}
\begin{center}
\includegraphics[width=0.7\textwidth]{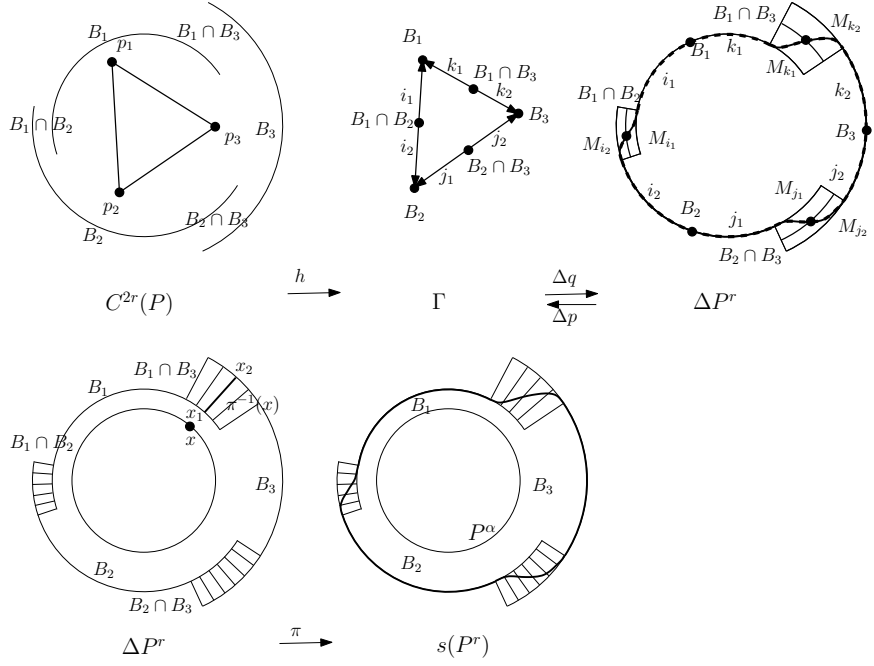}
\end{center}
\caption{Illustration of the maps and the spaces involved in Eq.~\ref{eq:f-ext}. 
\label{gamma_delta_u}}
\end{figure}

$\Delta P^r$ can be realized using the concept of mapping cylinder, see the top right most picture in Figure~\ref{gamma_delta_u}. 
The sequence of inclusion maps associated with each simplex in $\Gamma$
\begin{equation*} 
(B_{i_0}  \cap \cdots \cap B_{i_n}) \hookrightarrow (B_{i_0} \cap \cdots \cap B_{i_{n-1}}) \hookrightarrow \cdots \hookrightarrow (B_{i_0} \cap \cdots \cap B_{i_{n-k}}), 
\label{sequence-inclusion}
\end{equation*}
induces an iterated mapping cylinder. $\Delta P^r $ is obtained by gluing these iterated mapping cylinders over all simplices in $\Gamma$
each simplex $\Delta^k$ having $p_i$ as a vertex in the  barycentric subdivision of $\sigma$ is mapped into $B(p_i, r)$ under $f$
along their boundaries, see~\cite{Hatcher} for details.
There is a canonical projection $\Delta p: \Delta P^r \rightarrow \Gamma$ induced by projecting each finite intersection to 
its corresponding vertex in $\Gamma$. Consider the mapping cylinder $M_{\Delta p}$. The Nerve Lemma is proved in~\cite{Hatcher} 
by showing $M_{\Delta p}$ deformation retracts to $\Delta P^r$. In fact, the deformation retraction described 
there maps a simplex $\Delta^k \in \Gamma$ to the part of $\Delta P^r$ defined over the same $\Delta^k$, 
namely $\Delta q = e_{ \Delta P^r} \circ i_{\Gamma}$ is a homotopy equivalence and maps a simplex $\Delta^k \in \Gamma$
into the iterated mapping cylinder defined by the sequence of inclusion map associated with $\Delta^k$. 

On the other hand, $\Delta P^r$ can also be considered as the quotient space of the disjoint union of all the products 
$B_{i_0}  \cap \cdots \cap B_{i_n} \times \Delta^n $, as the subscripts range
over set of $n+1$ distinct indices and any $n\geq 0$, with the identifications over the faces 
of $\Delta^n$ using inclusions $B_{i_0} \cap \cdots \cap B_{i_n} \hookrightarrow B_{i_0} \cap \cdots 
\cap  \hat{B}_{i_j} \cap \cdots \cap B_{i_n}$ where $\hat{~}$ means the corresponding term is missing.
From this viewpoint, any point $x\in P^r$ has a fiber $\pi^{-1}(x)$ in $\Delta P^r$ defined as follows. 
$\pi^{-1}(x) = \{\sum_i t_i x_i\}$ where  $\sum_i t_i = 1$ and $t^i \geq 0$, and $x_i$ is a copy of $x$ in $B_i$
for those $B_i$ containing $x$. see the bottom left most picture in Figure~\ref{gamma_delta_u}. It is easy to see
that $P^r $ can be embedded into $\Delta P^r$ as a section of $\Delta P^r$, in particular  
$\pi$ is a homotopy equivalence. Thus $f$ is a homotopy equivalence. 

Observe that each point $y$ in an iterated mapping cylinder over some simplex
$\Delta^k=( B_{i_0}  \cap \cdots \cap B_{i_n}, \cdots, B_{i_0} \cap \cdots \cap
B_{i_{n-k}})$ in $\Gamma$ is in the fiber $\pi^{-1}(x)$ for some $x$ in  $B_{i_0}$. 
In other words, if $\Delta^k$ is in the closure of the star of a point $p\in P$ in $\Gamma$,  
then any point $y$ in the iterated mapping cylinder over $\Delta^k$ is in the fiber of 
a point $x \in B(p, r)$. Now consider a simplex $\sigma \in \Cech^{2r}(P)$. 
Any simplex
in its barycentric subdivision much be in the closure of the star of some vertex of $\sigma$. 
Thus $\sigma$, under the map $\Delta_q \circ h$, is mapped into the union of the iterated mapping cylinders
defined over the simplices in the barycentric subdivision of $\sigma$, and its image, under the map $\pi$, 
is further mapped into $\cup_{p\in \Vert(\sigma)}B(p,r)$. 

In addition, it is clear that the map $f$ can fix each vertex in 
$\Cech^{2r}(P)$. This proves the proposition. 
\end{proof}

\end{document}